\newcommand{\R}{\mathbb{R}}
\newcommand{\C}{\mathbb{C}}
\newcommand{\bone}{\mathbf{1}}
\newcommand{\N}{\mathbb{N}}
\newcommand{\ket}{\rangle}
\newcommand{\bra}{\langle}
\newcommand{\Li}{\mathcal{L}}
\newcommand{\Ai}{\mathcal{A}}
\newcommand{\Bi}{\mathcal{B}}
\newcommand{\Hi}{\mathcal{H}}
\newcommand{\Mi}{\mathcal{M}}
\newcommand{\Ki}{\mathcal{K}}
\newcommand{\Ei}{\mathcal{E}}
\newcommand{\Si}{\mathcal{S}}
\newcommand{\Fi}{\mathcal{F}}
\newcommand{\Oi}{\mathcal{O}}
\newcommand{\GH}{\mathfrak{H}}
\newcommand{\id}{\operatorname{id}}
\newtheorem{thm}{Theorem}[section]
\newtheorem{Lemma}[thm]{Lemma}
\newtheorem*{Lemma*}{Lemma}
\newtheorem{Corollary}[thm]{Corollary}
\newtheorem{prop}[thm]{Proposition}
\numberwithin{equation}{section}
\theoremstyle{definition}
\newtheorem{Definition}[thm]{Definition}
\newtheorem{dfn}[thm]{Definition}
\newtheorem*{Ack}{Acknowledgment}
\newtheorem{Remark}[thm]{Remark}
\newtheorem*{remark*}{Remark}
\newcommand{\Ad}{\operatorname{Ad}}
\newcommand{\Op}{\operatorname{Op}}
\newcommand{\pd}{\partial}
\title{Electromagnetism in terms of quantum measurements}
\author{Andreas Andersson}
\date{} 
\affil{\small Email: fornjotnr@hotmail.com}
\affil{\footnotesize Wollongong University, School of Mathematics and Applied Statistics, 2522 Wollongong, Australia}
\affil{Mathematics Subject Classification 2010 Primary: 81P15; Secondary: 81R15, 81R60}
\affil{Keywords: Quantum Measurements, POVM, Instrument, Rieffel Deformation, Warped Convolution, Electromagnetism, Gauge Fields, Equilibrium, Liouvillian, Pauli Theorem, Feynman Brackets}
\begin{document}
\maketitle
\abstract

We consider the question whether electromagnetism can be derived from quantum physics of measurements. It turns out that this is possible, both for quantum and classical electromagnetism, if we use more recent innovations such as smearing of observables and simultaneous measurability. In this way we justify the use of von Neumann-type measurement models for physical processes.

We apply operational quantum measurement theory to gain insight in fundamental aspects of quantum physics. Interactions of von Neumann type make the Heisenberg evolution of observables describable using explicit operator deformations. In this way one can obtain quantized electromagnetism as a measurement of a system by another. The relevant deformations (Rieffel deformations) have a mathematically well-defined ``classical" limit which is indeed classical electromagnetism for our choice of interaction. 


\section*{Introduction}
A ``quantum measurement", as described by von Neumann's model and its generalizations, impacts the system sufficiently to make it necessary to regard it as a disturbance. Only the system plus apparatus together forms an approximately closed system. In this paper we will use such models to describe any interaction between quantum system, being it what is in common language meant by a measurement or not. 

For this we use the operational quantum measurement theory \cite{BLM} which during the last two decades has provided a systematic generalization of von Neumann's original formulation of measurements. Just as von Neumann's model, this is one of the conceptually most important components of quantum theory since it gives an operational description of the very interactions themselves. Our task is to see how this is related to the other parts of quantum theory where the interactions are not of tensor product form, as is almost always the case in condensed matter physics (field theory or not). Then there is hope of understanding a great deal of physical phenomena as emerging from these single quantum interactions.    

In fact, the notion of ``sequential measurements", recalled in §\ref{Sequential Measurements}, gives a possibility of comparing the effect of two measurements on each other. The effect of one of the measurements is completely independent on what causes the second measurement. Hence, if experiments are not privileged among interactions, it should be possible to obtain physical forces as ``measurements" done by another (uncontrollable) quantum system. If we place quantum measurement in a central position in quantum physics then we better try to show that it reproduces what quantum theory describes so well, viz. electromagnetic interactions. We will show that this is possible, but we do not think it would be possible without these recent extensions of the von Neumann model. Having this in mind it is maybe understandable that the latter has often been regarded as artificial. This is nevertheless false; for example, the interaction term in the nonrelativistic QED Hamiltonian of an atom interacting with the electromagnetic field takes the same (tensor-product) form as that in the von Neumann model by invoking the dipole approximation only \cite[§4.2]{ALV}. Thus, simple as it might be, this model is really the starting point and with the mentioned addenda it will provide the most direct derivation of electromagnetism. 

However, the main objects in operational measurement theory are not the observables which appear e.g. in quantum field theory. To deal with the same formalism using the conventional observables requires manipulations with unbounded operators. The required technicalities for doing so are taken care of by the identification \cite{An1} of the measurement disturbance as a deformation which can be made mathematically rigorous and relates directly to generators of symmetry transformations and the conventional observables.  

Deformations related to spacetime symmetries and observables have been studied a lot, albeit not as deformations due to quantum interactions. Recently there has been a lot of interest in applying noncommutative geometry to physics, for example due the idea that spacetime itself might be noncommutative, i.e. the idea that the spacetime coordinates $x_0,x_1,x_2,x_3$ do not commute \cite{ADK}. While the main motivation for this (related to quantum gravity) has little relevance for what we will discuss here, it has become much clearer in what way noncommutativities in addition to the standard Heisenberg relation relate to physical effects. Most directly, descriptions of many phenomena in condensed matter physics can be obtained from the case of free particles by, instead of adding interactions by hand to the Hamiltonian or Lagrangian, introducing some nonstandard commutation relations between momenta and/or coordinates (see e.g. \cite{BaGh,DaJe}). Again, this is interesting since the noncommutativity of quantum mechanics has often been linked to a disturbance induced by measurement. What is then the relation between these other noncommutativities, which reproduce forces, and quantum measurements? Since the latter simply describes interactions with small systems of matter, there should exist some relation. In this paper we show that the above noncommutative models can be understood from the theory of quantum measurements initiated by von Neumann.

An interaction between two quantum systems is typically modeled using a unitary operator $W=e^{-i X_\mu\otimes Y^\mu}$ acting on the composite Hilbert space $\Hi\otimes\Ki$ of the two systems. Here $X=(X_1,\dots,X_n)$ and $Y=(Y_1,\dots,Y_n)$ are conventional observables, i.e. unbounded selfadjoint operators on $\Hi$ and $\Ki$ respectively. If $T$ is an observable on $\Hi$ then $W^{-1}(T\otimes\bone)W$ is the corresponding observable after the measurement, before "tracing out" the auxiliary space $\Ki$ using some state. In §\ref{techsection} we recall a result relating this observables to a certain operator deformation introduced quite recently in the context of algebraic quantum field theory \cite{BS,BLS} to incorporate noncommutative effects in physical models. In \cite{An1} it was also shown that the formula
$$
W^{-1}(T\otimes\bone)W=\int_{\R^n}e^{i y^\mu X_\mu}Te^{-iy^\mu X_\mu} \otimes dE^Y(y)
$$
holds even in the case when $T$ is unbounded but satisfies some weak requirements. In particular it includes the case when $A$ is a momentum or coordinate operator, or a polynomial of these (equivalently, a polynomial in annihilation and creation operators).

After recalling this result we devote §\ref{Instruments} and §\ref{Sequential Measurements} to see what the formalism looks like in the language of operational quantum measurement theory. These notions set the stage for §\ref{Minimal Coupling to Gauge Potential} where we discuss the origin of quantum discreteness. We take the viewpoint that the interaction of a quantum system with e.g. a magnetic field can be described by the same quantum measurement theory as a controlled interaction, and we then compare the strength of this interaction to that of an interaction due to an experimental measurement on the system. Since the magnetic field interacts via so much faster (smaller) energy transfers, its presence requires the experimentalist to describe his measurement as a ``joint measurement", a fact which affects the algebra of observables in a significant way. Namely, it forces the minimal substitution $\mathbf{P}\to\mathbf{P}-\mathbf{A}$ of the momentum observables $\mathbf{P}$. Furthermore, this substitution is obtained as a measurement of the coordinate operators $\mathbf{X}$ of the system. Viewing these as symmetry generators one gets the intuitive picture that the coupling to the gauge field $\mathbf{A}$ is equivalent to having the system in constant acceleration, the coordinate operators being up to a constant the generators of boosts in nonrelativistic quantum mechanics. This gives an interpretation of the recent observation made in \cite{Mu1} that the minimal coupling can be rigorously obtained as a deformation.

Working in an equilibrium representation of the observable algebra \cite{HHW} one can obtain a potential energy Hamiltonian in the same fashion as the minimal substitution mentioned above, i.e. from a quantum measurement interaction. This requires avoiding the Pauli theorem saying that there can be no selfadjoint operator corresponding to time \cite{MME}. Fortunately, in the algebraic formulation of equilibrium this no-go result does not apply and we can define a generator of energy translations completely analogous to those of momentum translations.  When we conclude §\ref{Potential Energy} we have obtained the electromagnetic force strength $F=(F_{\mu\nu})$ as the deformation matrix $\Theta$ in the above warped convolution formula; equivalently, electromagnetism is recovered from quantum interactions with $W=e^{-iF_{\mu\nu}X^\mu\otimes X^\nu}$. 

While there exists an abundance of literature on ``quantization", i.e. the concept of either recovering the classical limit $\hbar\to 0$ or to obtain quantum observable algebras starting from the algebra of functions on phase space \cite{Lan} (the former is what makes sense physically), there has been much less effort devoted to relating such a quantization to electromagnetism. Our final aim (§\ref{quantization}) is to discuss the classical limit of quantum-measurement interactions. The warped convolution is closely related to Rieffel deformation, which in turn was motivated by quantization. In particular, there is a well-defined ``classical limit", where the deformation parameter (in our case the coupling strength) goes to zero. In light of the physical meaning given to these constructions in this paper we can make some interesting observations: it follows from Rieffel's deformation quantization that, in the classical limit, the components of the interaction matrix $F_{\mu\nu}$ become the structure constants of a Poisson bracket on the classical algebra of smooth functions of momentum. Then we know that Maxwell's equations of classical electromagnetism emerge quite directly, since such a Poisson bracket has been discussed in relation to Feynman's derivation of electromagnetism \cite{Bra}.

\begin{Ack}
This work was written when the author was affiliated with the Max Planck Institute for Mathematics in the Sciences. The author thanks Gandalf Lechner, Rainer Verch, Adam Rennie and Albert Much for comments about the paper. 
\end{Ack}

\section{Preliminaries}\label{preliminaries}

\subsection{Motivating the von Neumann model}
Suppose that we have two quantum systems, modeled by Hilbert spaces $\Hi$ and $\Ki$ respectively (we speak of $\Hi$ as the ``quantum system" itself to facilitate notation). Suppose that $\Hi$ and $\Ki$ initially do not interact. It is then natural to model the composite system as $\Hi\otimes\Ki$, since then an operator on $\Hi$ cannot influence the properties on $\Ki$, and vice versa, and the spectrum of an operator of the form $X\otimes Y$ is just the Cartesian product of the spectra of $X$ and $Y$. 

Next suppose that there is an interaction between them. If we ignore any other interactions then the total time evolution on $\Hi\otimes\Ki$ should be unitary. The simplest choice is to take a unitary of the form $W=e^{-i X_\mu\otimes Y^\mu}$ since then the (necessarily selfadjoint) operators $X_1,\dots,X_n$ and $Y_1,\dots,X_n$ can easily be given physical interpretations as being associated to physical properties of $\Hi$ and $\Ki$ respectively (which is what we want when modeling an interaction between two quantum systems; otherwise we could, for most purposes, just consider $\Hi\otimes\Ki$ as a unit system from the beginning). 

\begin{dfn}
We refer to an interaction of the form $(\Hi,\Ki,W=e^{-i X_\mu\otimes Y^\mu})$ as a \textbf{von Neumann-model interaction}. 
\end{dfn}
We do \emph{not} make any distinction as to whether the interaction is controlled by a conscient, since that would be a very strong (non-Copernican) assumption and we are trying to minimalize the number of assumptions.

As mentioned in the introduction, an interaction of the form $X\otimes P$, with $X$ and $P$ the coordinate and momentum operator, appears in the Hamiltonian of nonrelativistic quantum electrodynamics in the dipole approximation and, as a result, is used in almost all applications of open quantum systems, to e.g. spectroscopy and other nonequilibrium processes. The interaction is usually taken to be a sum of terms $(a_k+a_k^\dagger)\otimes (c_k+c_k^\dagger)$, and both for Fermionic and Bosonic annilation operators $a_k$ and $c_k$ we have the interpretation of this interaction as being of the form $X\otimes P$. This interpretation is again lost after making the rotating-wave approximation.

\subsection{Quantum measurement theory}\label{Von Neumann Measurements}
The foundations of quantum measurements were laid by von Neumann when he introduced his measurement model \cite{vN}. Generalizations in various directions have been obtained, e.g. to operators with continuous spectra as done by Ozawa \cite{Oza}. During the last two decades the theory of quantum measurements has been developed more systematically in the language of operational quantum theory \cite{BLM}. In this formalism ``observables" are positive operator-valued measures (POVMs).

\begin{Definition}
Let $\Omega$ be a nonempty set and let $\Fi$ be a $\sigma$-algebra of subsets of $\Omega$. A countably additive mapping $E:\Fi\to\Bi(\Hi)$ is called a \textbf{POVM} or \textbf{semispectral measure} if $0\leq E(\Delta)\leq\bone$ for all $\Delta\in\Fi$ (i.e. each $E(\Delta)$ is an \textbf{effect}) and $E(\Omega)=\bone$.
\end{Definition}
\begin{Definition}
A POVM $E:\Fi\to\Bi(\Hi)$ is called a projection-valued measure (\textbf{PVM}) or \textbf{spectral measure} if in addition $E(\Delta)^2=E(\Delta)$ for all $\Delta\in\Fi$ or (equivalently) $E(\Delta)E(\Delta')=0$ whenever $\Delta\cap\Delta'=\emptyset$.
\end{Definition}
Thus if we regard the spectral measure $E^X$ of a selfadjoint operator $X$ as the observable then the POVMs are ``generalized observables". There are very good reasons to argue that POVMs are needed in addition to the PVMs in order to use quantum theory in full power \cite{BLM}, many of which will be very explicit in this paper. Nevertheless, an important effect of the tools we use below (deformations using selfadjoint operators as generators) is that the conventional observables (e.g. multiplication and differentiation operators) can be more directly involved also in this more general formulation of measurements. We shall use the term ``observable" to refer to both selfadjoint operators and POVMs. PVM observables are sometimes also called \textbf{sharp} while POVMs are called \textbf{unsharp}. 

Let $\Mi\subseteq\Bi(\Hi)$ be a von Neumann algebra.
\begin{Definition}\label{meas}
A \textbf{measurement} of an observable $E:\Fi\to\Mi$ is a quintuple $(\Ki,Z,\omega_\Ki,W,f)$ where $\Ki\cong\Hi$ is the separable Hilbert space, $Z$ is a selfadjoint operator on $\Ki$, $\omega_\Ki$ is a normal state on $\Bi(\Ki)$, $W$ is a unitary operator on $\Hi\otimes\Ki$ (the \textbf{time evolution}) and $f:\text{Spec}(Z)\to\Omega$ is a measurable function (into some space $\Omega$) called the \textbf{pointer function}. It is required that
\begin{equation}\label{repeat}
\omega[E(\Delta)]=(\omega\otimes\omega_\Ki)[W^{-1}(\bone\otimes E^Z(f^{-1}(\Delta))W)]
\end{equation}
for all $\omega\in\Mi_*$ and $\Delta\in\Fi$.  

\end{Definition}
The meaning of Definition \ref{meas} is that the elements of $\Mi$ evolve under the measurement according to $A\to W^{-1}(A\otimes\bone)W$, and similarly for those of $\Bi(\Ki)$, and that $\bone\otimes Z$ takes the same values in the final total state as $E$ does in the initial state. Usually one takes $f$ to be the identity function but we shall find in §\ref{Minimal Coupling to Gauge Potential} that we need a nontrivial $f$. With such a measurement scheme $(\Ki,Z,\omega_\Ki,W,f)$ the condition (\ref{repeat}) will be called the \textbf{probability reproducibility condition}. When $E$ is a PVM this condition cannot hold if $E$ has continuous spectrum. Hence the measured observable is either discrete or unsharp (or both) \cite[p.119]{BLM}.

\subsection{Operator deformations}
As a tool for constructing quantum field theories, e.g. for incorporating noncommutative effects of spacetime, Buchholz, Lechner and Summers \cite{BLS}, \cite{BS} introduced a way of deforming an operator $T$ on Hilbert space $\Hi$ to what they called a ``warped convolution" of the operator \cite{BLS}, \cite{BS}. The idea is as follows. For some positive integer $n$, consider an $n$-tuple of commuting selfadjoint operators $P=(P_\mu)=(P_0,P_1,\dots,P_{n-1})$ in $\Hi$; we take as an example the relativistic momentum operator, so $n=4$. It generates an $4$-parameter unitary representation $x\to U(x)$ of spacetime translations in the Hilbert space defined by the physical state. There is thus an action 
$$
\alpha_x(T)=U(x)^{-1}TU(x):=e^{ix\cdot P}Te^{-ix\cdot P}
$$
of $\R^4$ on $\Bi(\Hi)$. For a bounded operator $T$ which is smooth with respect to this action, the \textbf{warped convolution} of $T$ can be defined and equals
\begin{equation}\label{warped}
T_\Theta:=\int_{\R^4}\alpha_{\Theta x}(T)\, dE^P(x),
\end{equation}
where $dE^P(x)$ is the joint spectral measure of the $P_\mu$'s and $\Theta$ is a $4\times 4$ skew-symmetric matrix. In fact, \eqref{warped} makes sense also for certain unbounded operators \cite{Mu1}, a fact that we shall need. We shall only deform selfadjoint operators, and they will always be such that their deformations are again selfadjoint, by \cite{Mu3}. 

Warped convolution turns out to be related to the deformed products developed by Rieffel \cite{BLS},\cite{LW}. This is also very interesting since these product are used in quantization theory (as will be described and used in §\ref{quantization}). 

The formula \eqref{warped} has interesting applications in physics, for example when the generators are the momenta $P_\mu$ but other commuting generators can also be important \cite{Mu2}. Some familiar quantum-mechanical effects where reproduced in \cite{Mu1} by deforming some initial free operators into the desired ones. These results are very fascinating, in particular the fact that, it turns out, deformation with the coordinate operators $X^\mu$ conjugate to the $P_\mu$'s actually reproduces minimal coupling to a gauge field when the matrix $\Theta$ in \eqref{warped} is chosen properly, at least in the nonrelativistic setting (see §\ref{Minimal Coupling to Gauge Potential}). This is intuitive since the generators of Galilean boosts are basically the coordinate operators, and 
$$``\text{boost}\to\text{acceleration}\to\text{force}\to\text{gauge potential}."$$   
The above deformation \eqref{warped} somehow provides a path from symmetries to forces using only the commutation relations of the symmetry group. Can we also understand why this is true? 

Of concern is also the unitarity of the transformation $T\to T_\Theta$. More precisely, it is not true in general that there is a unitary operator $V\in\Bi(\Hi)$ such that an operator $T$ on $\Hi$ can be mapped to $T_\Theta$ by $T\to V^{-1}TV$. But it is known that there are situations when introducing noncommutativity by means of replacing $T$ by $T_\Theta$ can account for the difference between a system without and in the presence of an external force field \cite{Mu1}. The noncommutativity should, as in the case of the Heisenberg relation, come from the interaction between two or more quantum systems. Thus, in addition to the observable algebra of the system, the other player in this interaction (which is not seen in the above description) must be included in order to obtain this unitarity.  

\subsection{Deformations from quantum measurements}\label{techsection}
Let $X=(X_1,\dots X_n)$ and $Y=(Y_1,\dots,Y_n)$ be arbitrary commuting selfadjoint operators in Hilbert spaces $\Hi$ and $\Ki$ respectively. We use summation notation $X_\mu \otimes Y^\mu:=\sum_{k=1}^nX_k\otimes Y_k$ etc. We refer to \cite{An1} for the technical details of the following. 
\begin{thm}\label{maintheorem}
Let $T$ be an operator acting in $\Hi$ satisfying certain conditions (for example $T$ can be any polynomial in the coordinate or momentum operators). Then on a certain dense subspace of $\Hi\otimes\Ki$ we have (in the weak sense) the equalities 
\begin{align*}
e^{iX_\mu\otimes Y^\mu}(T\otimes\bone)e^{-iX_\mu\otimes Y^\mu}
=\int_{\R^n}e^{i y^\mu X_\mu}Te^{-iy^\mu X_\mu} \otimes dE^Y(y),
\end{align*}
where $\R^n\ni\to y\to dE^Y(y)$ is the joint spectral measure of $Y$. 
\end{thm}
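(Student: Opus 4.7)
The plan is to reduce the identity to a routine application of the joint spectral calculus of the commuting tuple $Y=(Y_1,\dots,Y_n)$, together with the orthogonality of a product PVM on the diagonal. First, because the $Y_\mu$ are strongly commuting selfadjoint operators, they possess a joint PVM $E^Y$ on $\R^n$; its ampliation $\bone\otimes E^Y$ is a PVM on $\Hi\otimes\Ki$. Since $y\mapsto e^{iy^\mu X_\mu}$ is a strongly continuous, uniformly bounded function into the unitaries of $\Bi(\Hi)$, the $\Bi(\Hi)$-valued functional calculus with respect to $\bone\otimes E^Y$ yields
\begin{equation*}
e^{iX_\mu\otimes Y^\mu}=\int_{\R^n}e^{iy^\mu X_\mu}\otimes dE^Y(y),
\end{equation*}
as an identity of unitaries on $\Hi\otimes\Ki$ (this is the selfadjointness statement that promotes $X_\mu\otimes Y^\mu$ to a bona fide generator via the joint spectral theorem applied to $Y$).

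Second, substitute this representation and its adjoint into the conjugation $e^{iX_\mu\otimes Y^\mu}(T\otimes\bone)e^{-iX_\mu\otimes Y^\mu}$. Formally one gets a double spectral integral
\begin{equation*}
\int_{\R^n}\int_{\R^n}\bigl(e^{iy^\mu X_\mu}Te^{-iy'^\mu X_\mu}\bigr)\otimes dE^Y(y)\,dE^Y(y').
\end{equation*}
The crucial point is that $E^Y$ is \emph{projection}-valued, so that $E^Y(\Delta)E^Y(\Delta')=E^Y(\Delta\cap\Delta')$; consequently the product spectral measure on $\R^n\times\R^n$ is supported on the diagonal $\{y=y'\}$, and the double integral collapses to the single integral $\int_{\R^n}e^{iy^\mu X_\mu}Te^{-iy^\mu X_\mu}\otimes dE^Y(y)$, which is the desired right-hand side.

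For bounded $T$ the above calculation is literally rigorous and can be verified by pairing with elementary tensors $\xi'\otimes\eta'$, $\xi\otimes\eta$ and invoking Fubini on the scalar product measure $d\langle\eta',E^Y(y)\eta\rangle$. The main obstacle, and the reason the statement is only claimed on a dense subspace in the weak sense, is accommodating unbounded $T$ (e.g.\ polynomials in the coordinate or momentum operators). Here one must choose a dense domain $\Di\subset\Hi\otimes\Ki$ on which three things hold simultaneously: (i) $T\otimes\bone$ and its conjugates by $e^{\pm iX_\mu\otimes Y^\mu}$ are well defined; (ii) the operator-valued function $y\mapsto e^{iy^\mu X_\mu}Te^{-iy^\mu X_\mu}\xi$ is Bochner measurable into $\Hi$ with at most polynomial growth in $y$; (iii) the vectors $\eta\in\Ki$ have spectral distribution $E^Y(\cdot)\eta$ that decays fast enough to pair against this growth.

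The natural choice is to take $\Di$ as the algebraic tensor product of a common smooth/analytic core for $X$ and $T$ in $\Hi$ (so that $e^{iy^\mu X_\mu}$ preserves the domain and $y\mapsto e^{iy^\mu X_\mu}Te^{-iy^\mu X_\mu}\xi$ is smooth in $y$) with the subspace of vectors $\eta\in\Ki$ whose joint $Y$-spectral support is compact; on such $\eta$ the measure $d\langle\eta',E^Y(y)\eta\rangle$ has compact support and the Fubini/dominated-convergence arguments go through. Assembling matrix elements $\langle\xi'\otimes\eta',\cdot\,\xi\otimes\eta\rangle$ and checking this pointwise yields the claimed weak identity on $\Di$, as carried out in the technical preparations of \cite{An1}. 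The hard part is not the algebra, which is dictated by the PVM collapse, but verifying the domain hypotheses and the integrability bound that makes the single-integral expression a well-defined quadratic form on $\Di$.
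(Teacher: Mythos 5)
The paper offers no proof of this theorem at all---it defers entirely to the technical reference \cite{An1}---so there is nothing in the text to compare your argument against step by step. Your outline is nevertheless the correct and natural one: the fibered spectral decomposition $e^{iX_\mu\otimes Y^\mu}=\int_{\R^n}e^{iy^\mu X_\mu}\otimes dE^Y(y)$, the collapse of the resulting double integral onto the diagonal via multiplicativity of the PVM $E^Y$ (using that $T\otimes\bone$ commutes with the second tensor factor), and the concentration of all remaining difficulty into domain and integrability hypotheses for unbounded $T$ is exactly the structure of the argument in the cited reference, and your proposed dense subspace (a smooth core for $X$ and $T$ tensored with vectors of compact joint $Y$-spectral support) is the right kind of domain on which the weak identity can be verified by pairing with elementary tensors and applying Fubini.
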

Note that if $\Ki$ is just another copy of $\Hi$ and $Y$ just another copy of $X$ then the right-hand side of the above formula looks exactly like the warped convolution \eqref{warped} except for the tensor product factor. 

Thus, if we consider the case when $W:=e^{-iX_\mu\otimes X^\mu}$ plays the role of a measurement time evolution, the post-measurement observable $W^*(T\otimes\bone)W$ in $\Hi\otimes\Ki$ obtained from $T\otimes\bone$ is like a warped convolution using the $X^\mu$'s as generators. There is a nontrivial distinction because of the tensor product that will be discussed in detail in §\ref{Minimal Coupling to Gauge Potential}. The important point is that when $\Ki$ is ignored, the warped convolution is a good way of modeling the deformation due to an interaction. This becomes interesting when one considers some $C^*$-algebra of operators $A$ because one then has the relation to Rieffel deformation. This will be essential for obtaining classical electromagnetism from quantum measurements. 

If $W$ is viewed as a measurement then the commuting operators $X_1,\dots X_n$ are the ones intended to be measured on the quantum system. These are usually not the same as the "measured observables", as we recall next. 

\subsection{Instruments}\label{Instruments}
Let $T$ be an operator on $\Hi$. So far we have discussed the element $W^{-1}(T\otimes\bone)W$ corresponding to $T$ after an interaction $W=\text{exp}(-iX\otimes Y)$ with some other system $\Bi(\Ki)$. But $T\otimes\bone$ is an operator on the composite system $\Hi\otimes\Ki$. The evolution of $T$ is obtained after choosing an initial state $\omega_\Ki$ on $\Bi(\Ki)$ and evaluating $W^{-1}(T\otimes\bone)W$ in $\bone\otimes\omega_\Ki$. This last step is similar to the partial trace operation on states.
 
If $\omega_\Ki\in\Bi(\Ki)_*$ is the initial state on $\Ki$ then the time evolution of an element $T\in\Bi(\Hi)$ is given by 
$$
T\to\int_{\R}e^{iyX}Te^{-iyX} \omega_\Ki[dE^Y(y)].
$$
Now it may be that the outcome of the measurement is recorded by measuring the pointer observable $E^Z$ ``conjugate" to $E^Y$, i.e. $[Y,Z]=i\bone$. In that case the evolution of $A$ can be made more precise; it is zoomed in using the outcome of the measurement. For this we use the notion of ``instruments" \cite{DL}. Namely, for all Borel subsets $\Delta$ of $\R$ we have the map
$$
\Ei^*_{\Delta}:\Bi(\Hi)\to\Bi(\Hi), \qquad T\to (\id\otimes\omega_\Ki)[W^{-1}(T\otimes E^Z(f^{-1}(\Delta)))W],
$$
which defines the \textbf{dual} $\Ei^*:\Delta\to\Ei^*_\Delta$ of the instrument of the interaction described by $X\otimes Y$ (here $f:\text{Spec}(Z)\to\text{Spec}(X)$ is a function relating the spectra as in Definition \ref{meas}). 
\begin{Remark}
In terms of the completely positive map $\Ei_\Delta^*$, the probability reproducibility condition \eqref{repeat} takes the form
$$
\omega[\Ei_\Delta^*(\bone)]=\omega[E(\Delta)], \qquad \forall \omega\in\Mi_*,\Delta\in\Fi.
$$
\end{Remark}
The \textbf{instrument} $\Delta\to(\Ei_\Delta:\Bi(\Hi)_*\to\Bi(\Hi)_*)$ is then defined via
$$
(\Ei_{\Delta}(\rho))(T)=\rho(\Ei^*_{\Delta}(T)), \qquad T\in\Bi(\Hi).
$$
Thus, the map $\Ei_{\Delta}$ on states corresponds to the Schrödinger picture while the map $\Ei_{\Delta}^* $ on observables corresponds to the Heisenberg picture.
For our purposes however, the most important use of instrument is that it can show us how the statistics of the measured observable are generally not the same as one might have guessed. 
\begin{Definition}\label{measobsdef}
Let $\Ei:\Bi(\Hi)_*\to\Bi(\Hi)_*$ be an instrument. The \textbf{measured observable} $E^{\Ei}$  associated to $\Ei$ is defined by
$$
E^{\Ei}(\Delta):=\Ei_\Delta^*(\bone).
$$
\end{Definition}
\begin{Remark}
For a given instrument $\Ei$ the measured observable $E^\Ei$ is unique. On the other hand, there are many instruments which define the same observable $E^\Ei$.  
\end{Remark}
Since we have obtained an explicit formula for the deformed observables, the observable associated to $\Ei$ can be calculated explicitly \cite{An1}. 
\begin{Corollary}
Let $(\Ki,Z,\omega_\Ki,e^{-i\kappa X\otimes Y},f)$ be a measurement of a sharp observable $X$ as in Definition \ref{meas} and assume that $[Y,Z]=i\bone$. Then the measured observable is given by  
\begin{equation}\label{measobs}
E^{\Ei}_\kappa(\Delta)=\int_{\R}\omega_\Ki[dE^Z(f^{-1}(\Delta-\kappa x))]\, dE^X(x). 
\end{equation}
\end{Corollary}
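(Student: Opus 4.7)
The plan is to unfold Definition \ref{measobsdef}, apply Theorem \ref{maintheorem} in its mirror form (with the bounded operator sitting in the $\Ki$-slot), and then reduce the inner conjugation to a spectral shift using the canonical commutation relation $[Y,Z]=i\bone$. By Definition \ref{measobsdef} together with the formula for $\Ei^*$ given just above it,
\[
E^\Ei_\kappa(\Delta) = \Ei^*_\Delta(\bone) = (\id\otimes\omega_\Ki)\bigl[W^{-1}\bigl(\bone\otimes E^Z(f^{-1}(\Delta))\bigr)W\bigr]
\]
with $W=e^{-i\kappa X\otimes Y}$. The conjugation inside the partial expectation is the counterpart of the one in Theorem \ref{maintheorem} but with the bounded operator in the second tensor factor: the identical argument, applied to $\bone\otimes S$ in place of $T\otimes\bone$, yields
\[
W^{-1}(\bone\otimes S)W = \int_\R dE^X(x)\otimes e^{i\kappa xY}\,S\,e^{-i\kappa xY}
\]
for any bounded $S\in\Bi(\Ki)$, in particular for $S=E^Z(f^{-1}(\Delta))$.

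It remains to identify the inner factor. Because $[Y,Z]=i\bone$ is central, the Baker--Campbell--Hausdorff series truncates, so $e^{i\kappa xY}Ze^{-i\kappa xY}$ differs from $Z$ by a scalar multiple of $\kappa x$. By functional calculus this lifts to a shift of the spectral projection, $e^{i\kappa xY}E^Z(\Delta')e^{-i\kappa xY}=E^Z(\Delta'\mp\kappa x)$, with the sign determined by the commutator convention. Taking $\Delta'=f^{-1}(\Delta)$ and absorbing the shift into the preimage then produces $E^Z(f^{-1}(\Delta-\kappa x))$, and evaluating $\omega_\Ki$ on the second tensor factor reproduces precisely \eqref{measobs}.

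The main obstacle is purely technical rather than conceptual: one must justify that $(\id\otimes\omega_\Ki)$ commutes with the operator-valued integral against $dE^X$, and that the mirror form of Theorem \ref{maintheorem} is applicable to the bounded spectral projection $S$. Since $S$ is a bona fide element of $\Bi(\Ki)$ and $X,Y$ are commuting tuples of self-adjoint operators in their respective Hilbert spaces, both points amount to standard Fubini-type manipulations for joint spectral measures, already handled by the framework of \cite{An1}.
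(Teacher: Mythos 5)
Your proposal is correct and takes essentially the same route as the paper, which states the corollary without proof (deferring to \cite{An1}) but carries out exactly this computation --- decompose $W$ over $dE^{X}$, conjugate the pointer projection $E^Z$ via the CCR shift, then evaluate $\omega_\Ki$ on the second factor --- explicitly in \S\ref{discsec} for the magnetic-field example. The only loose point, which the paper's own statement shares, is your step ``absorbing the shift into the preimage'': the conjugation yields $E^Z(f^{-1}(\Delta)\pm\kappa x)$, and identifying this with $E^Z(f^{-1}(\Delta\mp\kappa x))$ requires $f$ to intertwine translations (automatic for the identity pointer function up to the sign fixed by the conventions $W=e^{-i\kappa X\otimes Y}$ and $[Y,Z]=i\bone$).
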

Therefore, unless $\omega_\Ki[dE^Z(f^{-1}(\Delta-\kappa x))]=\chi_\Delta$ for all $x\in\text{Spec}(X)$, the measured observable will not be equal to the one ``intended" to be measured, i.e. the spectral measure $E^X$ of the operator $X$ appearing in $e^{-i\kappa X\otimes Y}$. When we have $[Y,Z]=i\bone$, this can only happen if $\kappa=0$. On the other hand, if $[Y,Z]=0$ then $\Ei_\Delta^*(\bone)=\omega_\Ki[E^Z(f^{-1}(\Delta))]\bone$ for all $\Delta$ so that only a multiple of the identity can be measured (which is usually far from $E^X$!). This manifests the trade-off between accuracy and disturbance. We will see an example in §\ref{Minimal Coupling to Gauge Potential} where it is important which observable is actually measured. 

\subsection{Sequential measurements}\label{Sequential Measurements}
The above formalism can be used to describe a subsequent measurement of a second observable, something which is nontrivial because of the disturbance caused by the first interaction. For example, after having measured position, it would be interesting to see how this affects a measurement of momentum, since by the commutation relations the momentum is deformed by the position measurement. Again the notion of instrument is well adapted for this task. 

Let $E_1$ and $E_2$  be two POVMs on $\Hi$ associated to instruments $\Ei^1$ and $\Ei^2$, respectively, so that for every trace-class operator $\rho$ on $\Bi(\Hi)$ and Borel set $\Delta\subset\R$ we have
$$
\text{Tr}\{\rho E_1(\Delta)\}=\text{Tr}\{\Ei_\Delta^1(\rho)\},\quad\quad \text{Tr}\{\rho E_2(\Delta)\}=\text{Tr}\{\Ei_\Delta^2(\rho)\}.
$$ 
If $E_1$ is measured with outcome in $\Delta_1$ followed by $E_2$ in $\Delta_2$ then the instrument $\Ei^2\circ\Ei^1$ of the composite measurement takes the value $\Ei_{\Delta_2}^2\circ \Ei_{\Delta_1}^1$. There is a unique observable $F$ associated to this element, given by \cite{CHT2}
$$
F(\Delta_1\times\Delta_2):=(\Ei_{\Delta_1}^1)^*[E_2(\Delta_2)].
$$
Of relevance are the so-called marginals given by
$$
F_1(\Delta_1):=F(\Delta_1\times\R)=(\Ei_{\Delta_1}^1)^*[E_2(\R)]=E_1(\Delta_1),
$$
$$
F_2(\Delta_2):=F(\R\times\Delta_2)=(\Ei_{\R}^1)^*[E_2(\Delta_2)]\equiv E_2'(\Delta_2),
$$
so that the second marginal is not $E_2$ but a deformed version $E_2'$. The case when $E_2=E_2'$ means that each effect $E_2(\Delta_2)$ of $E_2$ is a fixed point of $(\Ei_{\R}^1)^*$. In \cite{HW} this scenario is summarized by saying that $E_1$ can be measured ``without disturbing" $E_2$. For the kind of observables we consider here, i.e. the spectral measures of selfadjoint operators, $E_1$ does not disturb $E_2$ if and only if $[E_1(\Delta),E_2(\Delta)]=0$ for all $\Delta\subset\R$. This is just another way of seeing that if $\Mi$ is the observable algebra of interest which we deform by generators of a maximal abelian subalgebra $\Ai\subset\Mi$, then the deformation of $B\in\Mi$ is nontrivial precisely when $B\notin\Ai$. This is easy to see in terms of projections, since $\Ai$ is generated by its projections. 

\begin{Definition} Let $F_1$ and $F_2$ be POVMs  on $(\Omega_1,\Fi_1)$ and $(\Omega_2,\Fi_2)$  respectively with values in $\Bi(\Hi)$. We say that $F_1$ and $F_2$ are \textbf{simultaneously measurable} if there exists a POVM $F$ on $(\Omega_1\times\Omega_2,\Fi_1\times \Fi_2)$ with marginals $F_1,F_2$. 
\end{Definition}

\section{Minimal coupling to gauge field}\label{Minimal Coupling to Gauge Potential}
Starting with the free Hamiltonian of a charged mass-$m$ particle in three dimensions,
\begin{equation*}
H_0:=\frac{\mathbf{p}^2}{2m},
\end{equation*}
the presence of a magnetic field requires changing the momentum $\mathbf{p}$ to $\mathbf{p}+q\mathbf{A}$,
\begin{equation}\label{Hqclassical}
H_q:=\frac{(\mathbf{p}+q\mathbf{A})^2}{2m},
\end{equation}
where $q$ is the charge and $\mathbf{A}$ is the electromagnetic vector potential \cite{Ja}. In quantum field theory language,
the field creating the given particle species has been ``minimally coupled" to the gauge field $\mathbf{A}$. 

From now on, $H_q$ will denote the canonically quantized version of \eqref{Hqclassical}, so $\mathbf{p}$ is replaced by a triple $\mathbf{P}=(P_1,P_2,P_3)$ of operators on a Hilbert space $\Hi$ in which also acts a triple $\mathbf{X}=(X_1,X_2,X_3)$ of operators such that $[P_j,X_k]=i$. 

It turns out that $H_q$ can be obtained from $H_0$ by a certain (nonrelativistic) ``boosting" of the system. Namely, consider a skew-symmetric $3\times 3$ matrix $\Theta$ of the form 
\begin{equation}\label{Thetaisthis}
\Theta^{jk}=\epsilon^{ijk}B_i
\end{equation}
for some vector $\textbf{B}=(B_1,B_2,B_3)$.  
\begin{Lemma}[{\cite[Lemma 4.1]{Mu1}}]
Define the real skewsymmetric $3\times 3$ matrix $\Theta$ by Equation \eqref{Thetaisthis} and define an action of $\R^3$ on operators $A$ in $\Hi$ by 
\begin{equation}\label{Muchaction}
\alpha_p(A):=e^{-i\mathbf{p}\cdot\mathbf{X}}Ae^{i\mathbf{p}\cdot\mathbf{X}},\qquad \forall \mathbf{p}\in\R^3.
\end{equation}
Then the momentum operator deforms as in \eqref{warped} into 
\begin{equation}\label{Muchobs}
\mathbf{P}_{q\Theta}=\mathbf{P}+q\mathbf{A},
\end{equation}
where $\mathbf{A}:=\epsilon^{j,k,l}B_kX_l$. 
\end{Lemma}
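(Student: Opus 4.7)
The plan is to evaluate the warped-convolution integral \eqref{warped} componentwise with $T=P_j$, using the $\R^3$-action \eqref{Muchaction} whose generators are (up to sign) the commuting coordinate operators $\mathbf{X}$. Thus I would start from
$$
(P_j)_{q\Theta}=\int_{\R^3}\alpha_{q\Theta x}(P_j)\,dE^X(x)
$$
and handle the integrand and the integration against the joint spectral measure of $\mathbf{X}$ separately.

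For the integrand, a one-line Baker-Campbell-Hausdorff argument suffices: since $[P_j,X_k]=i\delta_{jk}$, the commutator $[-i\mathbf{p}\cdot\mathbf{X},P_j]=-p_j\bone$ is a multiple of the identity, so all further nested commutators vanish. Hence $\alpha_p(P_j)=P_j-p_j\bone$, and substituting $p=q\Theta x$ produces the integrand $P_j-q\,\Theta^{jk}x_k\bone$. The constant piece gives $P_j$ because $E^X(\R^3)=\bone$, while the $x_k$-piece is evaluated by the joint functional calculus of $\mathbf{X}$ via $\int x_k\,dE^X(x)=X_k$. This yields the intermediate identity $(P_j)_{q\Theta}=P_j-q\,\Theta^{jk}X_k$. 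Substituting $\Theta^{jk}=\epsilon^{ijk}B_i$ and using the antisymmetry of the Levi-Civita symbol (with relabeling $(i,k)\mapsto(k,l)$) transforms $-\Theta^{jk}X_k$ into $\epsilon^{jkl}B_kX_l=A_j$, which completes the identification $\mathbf{P}_{q\Theta}=\mathbf{P}+q\mathbf{A}$.

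The only genuinely nontrivial point is that $P_j$ is unbounded, whereas \eqref{warped} is originally stated for bounded operators that are smooth under $\alpha$. The required extension to polynomials in $\mathbf{P}$ and $\mathbf{X}$ is precisely the content of the results of \cite{Mu1} invoked in §\ref{techsection}; on the dense domain of joint smooth vectors of $\mathbf{X}$ (for instance the Schwartz space in the standard representation $\Hi=L^2(\R^3)$) the commutation of $P_j$ with the integral and the functional-calculus identification $\int x_k\,dE^X(x)=X_k$ are both rigorous. The main obstacle is therefore not algebraic but a domain check, already handled in the cited literature rather than requiring a new argument here.
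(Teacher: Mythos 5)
The paper offers no proof of this lemma at all---it is imported verbatim from \cite[Lemma 4.1]{Mu1}---and your direct computation (adjoint action on $P_j$ reduces to the first commutator since $[-i\mathbf{p}\cdot\mathbf{X},P_j]$ is central, then spectral calculus gives $\int x_k\,dE^X(x)=X_k$, then the Levi--Civita relabeling) is exactly the standard verification carried out in that reference. Your algebra and sign bookkeeping check out against the paper's conventions $[P_j,X_k]=i\delta_{jk}$ and $\Theta^{jk}=\epsilon^{ijk}B_i$, and deferring the unbounded-operator domain issues to \cite{Mu1} and \S\ref{techsection} is appropriate here.
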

Recall that the coordinate operators are up to a constant the generators of boosts in nonrelativistic mechanics. 
Thus \eqref{Muchobs} says that applying a boost to the system, via the action \eqref{Muchaction}, results in the minimal substitution \eqref{Hqclassical} on quantum level. 

Here we shall try to elucidate the meaning of this peculiarity. The idea is that, in view of Theorem \ref{maintheorem}, it seems plausible that we can accomplish such a deformation as a (Heisenberg-picture) unitary evolution by enlarging the Hilbert space. 

So let $\Hi$ be the Hilbert space of the free (spinless) particle, so that the most natural way to view $\Hi$ is as $L^2(\R^3)$. The state space of the magnetic field is taken to be $\Ki=L^2(\R^3)$ as well. We start with the von Neumann interaction
\begin{equation}\label{Twist}
W_q=e^{-iq\epsilon^{ijk}B_iX_j\otimes X_k},
\end{equation}
where the $X_j$'s are the components of the coordinate operator $\mathbf{X}=(X_1,X_2,X_3)$ and Einstein summation is implicit for $j,k=1,2,3$. This interaction looks a little bit unfamiliar in the context of measurements since we take $\mathbf{X}$ on both factors. We could equivalently have taken $\mathbf{X}\otimes\textbf{P}$ but the particle-field interaction is more symmetric in the above notation. The coupling matrix $\Theta$ will be identified with the magnetic field and the magnetic two-form acts on tangent vectors, hence the position operators on both factors. 

In terms of the choice of ``vector potential" $\mathbf{A}$ given by $A^k:=\Theta^{jk}X_j$ (recall $\mathbf{B}=\boldsymbol\nabla\times\mathbf{A}$), the interaction unitary \eqref{Twist} is
$$
W_q=e^{-iq\mathbf{X}\otimes\mathbf{A}}:=e^{-iqX_1\otimes A_1}e^{-iqX_2\otimes A_2}e^{-iqX_3\otimes A_3}.
$$

To motivate the choice \eqref{Twist} we assume that electromagnetic interactions results from mutual exchange in energy between the magnetic field and the particle. The system described by $\Ki$ consists of particles in states very similar to that of the single particle in the plane described by $\Hi$ (or else there could be no interaction); they are treated on an equal footing as is seen explicitly from the symmetry in the  interaction operator \eqref{Twist}. 

\begin{Lemma}\label{potthm}
With $W_q$ given by \eqref{Twist} and $\Theta^{jk}X_j=A^k$ for $\mathbf{A}=(A_1,A_2,A_3)$,
$$
W_q^{-1}(\mathbf{P}\otimes\bone)W_q=\mathbf{P}\otimes\bone+\bone\otimes q\mathbf{A}
$$
\end{Lemma}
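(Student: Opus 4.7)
The plan is to reduce the statement to a direct application of Theorem \ref{maintheorem}. I would begin by rewriting $W_q$ in the tensor-product form demanded there: set $\tilde{X}_k := q A^k = q\epsilon^{lmk}B_lX_m$ acting on $\Hi$ and $\tilde{Y}^k := X_k$ acting on $\Ki$, for $k=1,2,3$. Both tuples consist of mutually commuting selfadjoint operators (the $\tilde{X}_k$'s because each is a scalar linear combination of the pairwise commuting $X_1,X_2,X_3$ on $\Hi$), so this is a legitimate identification, and a direct comparison of exponents gives $W_q = e^{-i\tilde{X}_k\otimes\tilde{Y}^k}$. The hypotheses of Theorem \ref{maintheorem} are then satisfied for $T = P_j$ (the $j$-th momentum component, a first-degree polynomial in momentum), which yields
$$
W_q^{-1}(P_j\otimes\bone)W_q = \int_{\R^3} e^{iy^k \tilde{X}_k}\,P_j\, e^{-iy^k\tilde{X}_k}\otimes dE^{\tilde{Y}}(y).
$$

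Next I would carry out the inner conjugation by observing that $y^k\tilde{X}_k = qy^k\epsilon^{lmk}B_lX_m$ is a scalar-coefficient linear combination of the coordinate operators on $\Hi$. Its commutator with $P_j$ is therefore a $c$-number, $[y^k\tilde{X}_k,P_j] = iqy^k\epsilon^{ljk}B_l\bone$, and since all higher nested commutators vanish the Baker--Campbell--Hausdorff series truncates to
$$
e^{iy^k\tilde{X}_k}P_j e^{-iy^k\tilde{X}_k} = P_j - qy^k\epsilon^{ljk}B_l\bone.
$$
Substituting into the integral and using $\int y^k\, dE^{\tilde{Y}}(y) = X_k$ on $\Ki$ gives $P_j\otimes\bone - q\epsilon^{ljk}B_l\,\bone\otimes X_k$. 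A Levi-Civita identity finishes the job: swapping the last two indices gives $-\epsilon^{ljk} = \epsilon^{lkj}$, so $-\epsilon^{ljk}B_lX_k = \epsilon^{lkj}B_lX_k = A^j$ on $\Ki$, and the expression collapses to $\bone\otimes qA^j$. Assembling the three components recovers the asserted vector identity $W_q^{-1}(\mathbf{P}\otimes\bone)W_q = \mathbf{P}\otimes\bone + \bone\otimes q\mathbf{A}$.

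The only real obstacle is a technical one: justifying Theorem \ref{maintheorem} for the unbounded $T = P_j$ together with the (likewise unbounded) generators $\tilde{X}_k$, so that the identity must be understood in the weak sense on an appropriate common dense core, as in \cite{An1}. Since $P_j$ is precisely one of the polynomial cases explicitly covered by that reference, no fresh domain analysis is required; the elementary algebraic computation sketched above, performed on (for instance) the Schwartz space of rapidly decreasing smooth vectors, delivers the identity in the weak sense and hence completes the proof.
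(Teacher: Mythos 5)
Your proof is correct and takes essentially the same route the paper intends: the paper's entire proof of this lemma is the single line ``Straightforward in view of Theorem \ref{maintheorem}'', and you have merely supplied the details it leaves implicit (the identification $W_q=e^{-i\tilde{X}_k\otimes\tilde{Y}^k}$, the truncating Baker--Campbell--Hausdorff step for a $c$-number commutator, the integration $\int y^k\,dE^{\tilde Y}(y)=X_k$, and the Levi--Civita index bookkeeping). The only point worth noting is that your commutator uses the standard convention $[X_m,P_j]=i\delta_{mj}$ rather than the paper's literal $[P_j,X_k]=i$, which is the choice needed for the sign in $\mathbf{P}\otimes\bone+\bone\otimes q\mathbf{A}$ to come out as stated.
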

\begin{proof} Straightforward in view of Theorem \ref{maintheorem}. 
\end{proof}
Thus, for the quantum system $\Hi$ of a free particle in $\R^3$, a measurement of the coordinate operator $\mathbf{X}$ on $\Hi$ using $W_q$ gives the minimal substitution.

Lemma \ref{potthm} is an open-system analogue to the equality \eqref{Muchobs} obtained in \cite{Mu1} without including the part $\Ki$ in the description. The operator $\mathbf{A}$ does not act on the same Hilbert space as $\mathbf{P}$ in the tensor product picture, in accordance with how the field-particle-interaction are usually treated in spin-boson-type models etc. On the other hand, in some condensed matter models it is crucial that these operators actually do fail to commute. Namely, the operator $(\mathbf{P}+q\mathbf{A})^2/2m$ has discrete spectrum as a consequence of $\mathbf{A}$ acting on the same Hilbert space and not commuting with $\mathbf{P}$. This is well-known to be the physically correct result in some cases: an electron moving in a plane with a perpendicular magnetic field can only adopt a discrete set energy states. However, the free Hamiltonian $H_0=\mathbf{P}^2/2m$ has continuous spectrum and since a unitary transformation preserves the spectrum it is clear that the same is true for $W_q^{-1}(H_0\otimes\bone)W_q$ given in Lemma \ref{potthm}. 

So, while the same term $q\mathbf{A}$ appears quite satisfactorily by viewing the field as a second quantum system, it seems as if the tensor product structure prevents a proper mathematical framework for describing all physical effects. The way in which the vector potential is dealt with as an operator is not universal as to if it acts on the same Hilbert space as the particle operators or not. When there is a third interacting component (e.g. experimentalist) the gauge potential and the particle operators are assumed to act on the same Hilbert space and one achieves discretization in energy. When the experimentalist controls the field interaction (e.g. as in quantum optics) there should be no noncommutativity and a tensor product is used\footnote{Or, e.g. as for the Hamiltonian of nonrelativistic quantum electrodynamics, the vector potential acts on both spaces but $[\mathbf{P},\mathbf{A}]=0$ because of the representation of $\mathbf{P}$ and the Coulomb gauge $\boldsymbol\nabla\cdot\mathbf{A}=0$.}. 
 
Using a general open-system approach only there seems to be some arbitrariness in choice between commutativity and noncommutativity of $\mathbf{P}$ and $\mathbf{A}$ when more that two systems interact. The tensor product remains unless removed by hand or disregarded from the beginning. Still, we shall see that the expected discreteness appears from quantum measurements anyway, i.e. that the qualitative picture is always correct. 

\subsection{Discreteness}\label{discsec}
Now that we have obtained such a nice picture of the field interaction in terms of energy transfer, one may ask if it is possible to get an intuitive picture also for the cause of the discretization of energy levels that occur in some condensed matter systems of particles in constant magnetic fields. The presence of such energy transfers (magnetic field) somehow affects the way in which another system interacting with the particle can abstract or donate energy to the particle. We shall discuss this using some aspects of the sequential measurements mentioned in §\ref{Sequential Measurements}. 

For ease of notation, define the map
$$
\tilde{B}:\R^3\to\R^3,\qquad \tilde{B}\textbf{y}:= \mathbf{B}\wedge\mathbf{y},
$$
where $\mathbf{B}\wedge\textbf{y}$ is viewed as a function of tangent vectors $\mathbf{x}\in\R^3$ so that the scalar product $\tilde{B}\mathbf{y}\cdot\mathbf{x}$ is the dual pairing between cotangent and tangent vectors. In this notion the above coupling is $W=e^{-i\tilde{B}\mathbf{X}\otimes\mathbf{X}}$. For the measurement with $W$ we note that the pointer function from Definition \ref{meas} is given by $f=\tilde{B}^{-1}$.

Let us again view the apparatus Hilbert space as $\Ki=L^2(\R^3)$ and the total Hilbert space as $\Hi\otimes\Ki=L^2(\R^3)\otimes L^2(\R^3)$. We spectrally decompose $\textbf{X}\otimes\bone$ and $\bone\otimes\mathbf{X}$ as
$$
\mathbf{X}\otimes\bone=\int_{\R^3}\mathbf{x}\, dE^{\mathbf{X}\otimes\bone}(\mathbf{x}),\quad\quad  \bone\otimes\mathbf{X}=\int_{\R^3}\mathbf{x}\, dE^{\bone\otimes\textbf{X}}(\mathbf{x}).
$$
In what follows we apply constructions summarized in \cite{BL}. Suppose that the initial state of the magnetic field is a vector state $\omega_\psi$ (the below formulae can easily be adjusted to general initial states). We view $\psi$ as a function on momentum space $(\R^3)^*=\R^3$ and we assume for later purposes that $\psi$ is continuous and has compact support. The dual $\Ei_{\Delta}^*$ of the measurement instrument, as discussed in the preliminaries, is defined in terms of the state $\id\otimes \omega_\psi$: If we write the interaction as $W=e^{-i\tilde{B}\textbf{X}\otimes\textbf{X}}=\int_{\R^3} dE^{\textbf{X}\otimes\bone}(\textbf{x})\otimes e^{-i\tilde{B}\textbf{x}\textbf{X}}  $ (using the above map $\tilde{B}$) then for any observable $\textbf{G}$ on the particle space $\Hi$, Equation \eqref{measobs} gives
\begin{align*}
\Ei_{\Delta}^*(\textbf{G})&=(\bone\otimes\omega_\psi)(W^{-1}(\textbf{G}\otimes E^{\bone\otimes\textbf{P}}(\tilde{B}\Delta))W)
\\&=\int_{\R^3} \int_{\R^3}\, dE^{\textbf{X}\otimes\bone}(\textbf{y})\textbf{G}\, dE^{\textbf{X}\otimes\bone}(\textbf{z})\bra\psi|e^{i\tilde{B}\textbf{y}\cdot(\bone\otimes\textbf{X})}E^{\bone\otimes\textbf{P}}(\tilde{B}\Delta)e^{-i\tilde{B}\textbf{z}\cdot(\bone\otimes\textbf{X})}\psi\ket
\\&=\int_{\R^3} \int_{\R^3} \psi^*(\tilde{B}(\Delta-\textbf{y}))\, dE^{\textbf{X}\otimes\bone}(\textbf{y})\textbf{G}\, dE^{\textbf{X}\otimes\bone}(\textbf{z})\psi(\tilde{B}(\Delta-\textbf{z}))
\\&=\int_{\Delta}K_{\textbf{x}}^*\textbf{G}K_{\textbf{x}},
\end{align*}
where we have defined the operators 
$$
K_{\textbf{x}}=\bra \textbf{x}|W\psi\ket=\psi(\tilde{B}(\textbf{x}-\textbf{X})):=\int_{\R^3}\psi(\tilde{B}(\textbf{x}-\textbf{y}))\, dE^{\textbf{X}\otimes\bone}(\textbf{y})
$$
The ``measured" observable (determined by the measurement process which gave the minimal coupling above), obtained from  Definition \ref{measobsdef} as
$$
\Delta\to\Ei_\Delta^*(\bone)=|\psi(\tilde{B}(\Delta-\textbf{X}))|^2
=\int_{\Delta}K_{\textbf{x}}^*K_{\textbf{x}},
$$ 
 is then not the spectral measure of $\textbf{X}$ but an unsharp version: the PVM $E^{\textbf{X}\otimes\bone}$ has been replaced by the POVM $\mu^B*E^{\textbf{X}\otimes\bone}$ with effects
$$
\big(\mu^B*E^{\textbf{X}\otimes\bone})(\Delta\big)=\int_{\R^3}\mu^B(\Delta-\textbf{y})\, dE^{\textbf{X}\otimes\bone}(\textbf{y})
$$
smeared by convolution with the probability measure $\mu^B(\Delta):=\bra\psi|E^{\bone\otimes\textbf{P}}(\tilde{B}\Delta)|\psi\ket$. For completeness, note that the following explicit formulae hold: 
\begin{align*}
\big(\mu^B*E^{\textbf{X}\otimes\bone}\big)(\Delta)&=\int_{\R^3}\mu^B(\Delta-\textbf{y}) dE^{\textbf{X}\otimes\bone}(\textbf{y})
\\&=\int_{\R^3}\bra\psi|E^{\bone\otimes\textbf{P}}(\tilde{B}(\Delta-\textbf{y}))\psi\ket \, dE^{\textbf{X}\otimes\bone}(\textbf{y})
\\&=\int_{\R^3}\bra\psi|e^{i\tilde{B}\textbf{y}\cdot(\bone\otimes\textbf{X})}E^{\bone\otimes\textbf{P}}(\tilde{B}\Delta)e^{-i\tilde{B}\textbf{y}\cdot(\bone\otimes\textbf{X})}\psi\ket\, dE^{\textbf{X}\otimes\bone}(\textbf{y})
\\&=\Ei_\Delta^*(\bone)
\end{align*}
[while in fact $(\textbf{X}\otimes\bone)*\mu^B:=\int\textbf{x}(\mu^B*dE^{\textbf{X}\otimes\bone})=\textbf{X}\otimes\bone$, leaving the abelian algebra generated by the position operators unchanged]. Note that 
$$
K_{\textbf{x}}^*K_{\textbf{x}}=|\psi(\tilde{B}(\textbf{x}-\textbf{X}))|^2
$$
are effect operators related via $\tilde{B}$ to the spatial distribution of matter (the operators $K_{\textbf{x}}$ are of Kraus type in the sense of \cite{Hol}). 

We obtain the following interpretation. The operator $\mu^B*E^{\mathbf{X}\otimes\bone}(\Delta)$ is obtained from $E^{\mathbf{X}\otimes\bone}(\Delta)$ by averaging over the operators $dE^{\mathbf{X}}(\mathbf{y})$ with weights $\bra\psi|dE^{\bone\otimes\mathbf{P}}(\tilde{B}(\Delta-\mathbf{y}))\psi\ket$ determined by the state of the magnetic field source, as well as the coupling $\tilde{B}$. 
We assume that $\psi$ vanishes at infinity, in which case 
$$
\lim_{B\to\infty}\bra\psi|e^{i\tilde{B}\mathbf{y}\cdot\textbf{X}}dE^{\mathbf{P}}(\tilde{B}\mathbf{x})e^{-i\tilde{B}\mathbf{y}\cdot\textbf{X}}\psi\ket=\delta(\mathbf{y}-\mathbf{x}), \quad\quad \forall\mathbf{x},\mathbf{y}\in\R^3,
$$
saying that in the unrealistic situation with infinite strength of the particle-field interaction, the position of the particle is not smeared. As we discuss next, this would imply that it is impossible to determine the energy of the electron. This limiting case serves merely to give intuition for the interesting cases with finite $\|\mathbf{B}\|$. The stronger the $\mathbf{B}$-field, the finer the discretization of the position but the coarser the energy, as we shall now see.

The observables $H_0$ and $\textbf{X}$ are sharp and they do not commute, so it would appear as if the interaction of the particle with the magnetic field makes it impossible to measure the energy of the particle \cite{HRS}. But there is no contradiction because it is well known that smearing the observable $H_0$ makes it jointly measurable with $\textbf{X}$. This smearing is precisely what causes the discretization of the free Hamiltonian in the Landau problem, as we now explain.

We have already seen that the observable actually ``measured" (by the field) is not $\textbf{X}$ but a smeared version. Therefore, all that needs to be done to obtain an Hamiltonian which is measurable in the presence of the magnetic field is to smear $H$ as well. This conclusion comes from the following important result, where $\Delta \to Q(\Delta)$ and $\Delta\to P(\Delta)$ denote the usual sharp position and momentum observables taking values in $\Bi(\Hi)$ where $\Hi=L^2(\R)$.

\begin{Lemma} [Recalled from \cite{HRS},\cite{CHT}]\label{Finishlemma}
Let $\chi$ and $\eta$ be probability measures on $\R$ and define position and momentum observables $Q_\chi$ and $P_\eta$ by
$$
Q_\chi(\Delta):=\int_{\R}Q(\Delta-q)\,d \chi(q), \quad\quad P_\eta(\Delta):=\int_{\R}P(\Delta-p)\, d\eta(p),
$$
respectively. These are simultaneously measurable if and only if there exists a positive trace one operator $T\in\Bi(\Hi)$ such that
$$
\chi(\Delta)=\mathrm{Tr}\{Q(\Delta)T\},\quad\quad \eta(\Delta)=\mathrm{Tr}\{P(\Delta)T\}
$$
for all Borel subsets $\Delta\subseteq\R$.
\end{Lemma}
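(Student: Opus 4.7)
The plan is to establish the two implications separately, with the sufficiency direction being constructive on $\Hi = L^2(\R)$ via the Weyl system, and necessity relying on a covariance argument combined with the structure theorem for covariant phase-space observables.

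For sufficiency, given a positive trace-one operator $T\in\Bi(\Hi)$, I would build the joint POVM explicitly. Introducing the Weyl operators $W(q,p):=e^{i(qP-pQ)}$, define
$$
G_T(\Delta):=\frac{1}{2\pi}\int_{\R^2}\chi_\Delta(q,p)\, W(q,p)\, T\, W(q,p)^{-1}\, dq\, dp
$$
for Borel sets $\Delta\subseteq\R^2$. Using $W(q,p)^{-1} Q\, W(q,p) = Q+q\bone$ and the analogous relation for $P$, a direct computation of the first marginal $G_T(\Delta\times\R)$ yields the convolution of $Q$ with the probability measure $\chi(\cdot)=\Tr\{Q(\cdot)T\}$, hence $G_T(\cdot\times\R)=Q_\chi$; the analogous calculation gives $G_T(\R\times\cdot)=P_\eta$ with $\eta(\cdot)=\Tr\{P(\cdot)T\}$. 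This establishes joint measurability with the stated formulae for $\chi$ and $\eta$.

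For necessity, suppose there exists any POVM $F$ on $\R^2$ with marginals $Q_\chi$ and $P_\eta$. The idea is to upgrade $F$ to a phase-space covariant joint observable with the same marginals. Since $Q_\chi$ and $P_\eta$ are individually covariant under the Weyl action (translations shift their spectral measures, and convolutions respect this), one averages $F$ over the Heisenberg group by setting
$$
\tilde{F}(\Delta):=\int_{\R^2} W(q,p)\, F(\Delta-(q,p))\, W(q,p)^{-1}\, d\mu(q,p)
$$
against an appropriate mean, and checks that the marginals of $\tilde F$ coincide with those of $F$ because $Q_\chi, P_\eta$ are themselves covariant. Then invoking Werner's/Holevo's structure theorem for covariant phase-space POVMs, one concludes $\tilde{F}=G_T$ for some positive trace-one $T$; comparing marginals forces $\chi=\Tr\{Q(\cdot)T\}$ and $\eta=\Tr\{P(\cdot)T\}$.

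The main obstacle is the necessity direction. The averaging procedure is only formal until one justifies convergence of the integral over the noncompact Heisenberg group; this typically requires either passing to a Følner net or exploiting that $Q_\chi$ and $P_\eta$ are already covariant so that one can use an invariant mean on the bounded operator-valued functions arising. Equally delicate is the invocation of the structure theorem, which must be applied to POVMs on $\R^2$ that are not assumed absolutely continuous or bounded in density. Once the covariant form $G_T$ is secured, the identification of $\chi$ and $\eta$ as $T$-expectations of $Q$ and $P$ is immediate from the sufficiency calculation.
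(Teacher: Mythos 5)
The first thing to say is that the paper contains no proof of Lemma \ref{Finishlemma} to compare against: the lemma is explicitly \emph{recalled} from \cite{HRS} and \cite{CHT}, and the author takes it as an external input. What you have written is, in architecture, precisely the proof that lives in those references — sufficiency via the covariant phase-space observable $G_T$ generated by $T$, necessity via covariantization of an arbitrary joint observable by averaging over phase-space translations followed by the structure theorem for covariant phase-space POVMs (Holevo, Werner, Cassinelli--De Vito--Toigo). So you have correctly identified the standard route; you have not found a different one, and you have supplied more than the paper does.

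Two substantive caveats. First, in the sufficiency direction your marginal computation is off by a parity conjugation: the Cartesian marginals of $G_T(\Delta)=\frac{1}{2\pi}\int\chi_\Delta(q,p)\,W(q,p)TW(q,p)^{*}\,dq\,dp$ are the convolutions of $Q$ and $P$ with the position and momentum distributions of $\Pi T\Pi$ (equivalently, with the reflected distributions of $T$), where $\Pi$ is the parity operator; one sees this already for $T=|\phi\ket\bra\phi|$, where the $p$-integral of the Husimi-type density produces $\int|\psi(x)|^2|\phi(x-q)|^2\,dx$, a convolution with $|\check\phi|^2$, $\check\phi(x)=\phi(-x)$. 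This does not damage the ``if and only if'' (replace $T$ by $\Pi T\Pi$, again positive and trace one), but the formulas you assert for $\chi$ and $\eta$ need that correction, and the sign conventions in $W(q,p)^{-1}QW(q,p)=Q\pm q\bone$ must be fixed consistently. Second, the necessity direction as written is a program, not a proof: you correctly flag that the average over the noncompact group exists only via an invariant mean, but the genuine work — showing that the resulting weakly defined, a priori only finitely additive object is a bona fide (countably additive, normalized) POVM to which the structure theorem applies, and that the marginals survive the mean — is exactly where \cite{CHT} and Werner's covariantization lemma spend their effort. As a self-contained proof your proposal therefore has an unclosed gap at its crux; as a reconstruction of the argument the paper is citing, it is faithful.
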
 
\begin{thm}\label{potthm2}
With the choice of measurement scheme $(\Ki,\bone\otimes\mathbf{P},\omega_\psi,W_q,\tilde{B}^{-1})$ of the coordinate operator $\mathbf{X}$ on $\Hi$ as outlined above, a energy observable on $\Hi$ (i.e. a POVM on $\Hi$ whose first moment is the selfadjoint operator $H_0$) is simultaneously measurable if and only if its effects are given by
\begin{align*}
\tilde{H}(\Delta)&:=\int_{\R^3} |\tilde{\psi}(\tilde{B}^{-1}\mathbf{p})|^2\, dE^{H_0}\Big(\Delta-\frac{\mathbf{p}^2}{2m}\Big)
\end{align*}
for all Borel $\Delta\subset\R$, where $\tilde{\psi}(\mathbf{x})$ is the Fourier transform of $\psi(\mathbf{p})$. 
\end{thm}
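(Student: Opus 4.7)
The plan is to combine the explicit form of the measured position observable derived just above the statement with Lemma \ref{Finishlemma}, extended from one dimension to three and transported from momentum to energy.

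First I would read off from the computations preceding the theorem that the observable actually measured by the scheme $(\Ki,\bone\otimes\mathbf{P},\omega_\psi,W_q,\tilde{B}^{-1})$ is the smeared position $Q_\chi=\mu^B*E^{\mathbf{X}\otimes\bone}$, with smearing measure $\chi=\mu^B$ satisfying $\mu^B(\Delta)=\bra\psi|E^{\bone\otimes\mathbf{P}}(\tilde{B}\Delta)|\psi\ket$; equivalently, $\mu^B$ is the pushforward of $|\psi(\mathbf{p})|^2\,d\mathbf{p}$ under $\mathbf{p}\mapsto\tilde{B}^{-1}\mathbf{p}$. This puts the measured observable in the form of the $Q_\chi$ appearing in Lemma \ref{Finishlemma}.

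Next I would invoke the three-dimensional analogue of Lemma \ref{Finishlemma}: a smeared momentum $P_\eta$ is simultaneously measurable with $Q_{\mu^B}$ iff there exists a density operator $T$ on $L^2(\R^3)$ whose position marginal is $\mu^B$ and whose momentum marginal is $\eta$. The candidate singled out by the measurement scheme is the pure state $T=|\phi\rangle\langle\phi|$ with position wave function $\phi(\mathbf{x})=|\det\tilde{B}|^{1/2}\,\psi(\tilde{B}\mathbf{x})$, for which a direct change of variables gives $|\phi(\mathbf{x})|^2\,d\mathbf{x}=d\mu^B(\mathbf{x})$. The transformation rule of the Fourier transform under a linear change of variables, combined with the skew-symmetry $\tilde{B}^{T}=-\tilde{B}$, then yields $|\hat\phi(\mathbf{p})|^2=|\det\tilde{B}|^{-1}|\tilde\psi(\tilde{B}^{-1}\mathbf{p})|^2$, so that the compatible smeared momentum is
\[P_\eta(\Delta)=\int_{\R^3}E^{\mathbf{P}}(\Delta-\mathbf{p})\,|\tilde\psi(\tilde{B}^{-1}\mathbf{p})|^2\,d\mathbf{p}\]
after absorbing the normalisation constant into the measure.

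I would then pass from smeared momentum to smeared energy by pushforward: since $E^{H_0}$ is the image of $E^{\mathbf{P}}$ under $h(\mathbf{p})=\mathbf{p}^2/(2m)$, the joint POVM realising the simultaneous measurability of $(Q_{\mu^B},P_\eta)$ on $\R^3\times\R^3$ pushes forward under $\id\times h$ to a joint POVM on $\R^3\times\R$ whose second marginal is a smeared energy observable $\tilde H$. Applying this pushforward term by term to the integrand for $P_\eta$ replaces $dE^{\mathbf{P}}(\Delta-\mathbf{p})$ by $dE^{H_0}(\Delta-\mathbf{p}^2/(2m))$ while leaving the weight $|\tilde\psi(\tilde{B}^{-1}\mathbf{p})|^2$ untouched, giving precisely the formula in the statement. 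The \emph{if} direction of the biconditional is then just this construction; for the \emph{only if} direction I would use the converse of Lemma \ref{Finishlemma} to force any compatible smeared momentum to come from a state with position marginal $\mu^B$, observe that the measurement scheme fixes this state to be $|\phi\rangle\langle\phi|$ up to phase, and use the uniqueness of the pushforward under $h$. A short centring computation confirms that the first moment $\int\lambda\,d\tilde H(\lambda)$ equals $H_0$, as required of an energy observable.

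The step I expect to be the main obstacle is the degeneracy $\det\tilde{B}=0$ forced by skew-symmetry of $\tilde B$ in odd dimension three, which makes the rescaling $\mathbf{x}\mapsto\tilde{B}\mathbf{x}$ and its inverse literally undefined. I plan to handle this exactly as in the Landau problem: split $\mathbf{X}$ and $\mathbf{P}$ into components parallel and perpendicular to $\mathbf{B}$, observe that $W_q$ acts trivially on the parallel factor so that momentum and energy in that direction remain sharp and unsmeared, and run the argument above on the two-dimensional transverse factor, where the restriction of $\tilde{B}$ is invertible. This is also the regime in which the smeared $\tilde H$ acquires its physically meaningful discrete spectrum of Landau levels, which is the intended output of the theorem.
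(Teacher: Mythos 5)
Your proposal follows essentially the same route as the paper's proof: identify the trace-one operator of Lemma \ref{Finishlemma} as the pure state with position density $|\psi(\tilde{B}\mathbf{x})|^2$, read off its momentum marginal $|\tilde{\psi}(\tilde{B}^{-1}\mathbf{p})|^2$ by a Fourier change of variables, and pass from the smeared momentum observable to the smeared energy observable via $\mathbf{p}\mapsto\mathbf{p}^2/2m$. Your extra care about the non-invertibility of $\tilde{B}$ (which necessarily has $\mathbf{B}$ in its kernel) and the proposed splitting into transverse and parallel components addresses a point the paper's own proof passes over silently, so the proposal is, if anything, more complete than the original.
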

\begin{proof}
From the expression $d\mu^B(\mathbf{x})=\bra\psi|dE^{\bone\otimes\mathbf{P}}(\tilde{B}\mathbf{x})|\psi\ket=|\psi(\tilde{B}\textbf{x})|^2$ of the convolution measure it is clear that the trace operator $T$ from the above lemma is $T=|\psi_{\tilde{B}}\ket\bra\psi_{\tilde{B}}|$ in our case, where $\psi_{\tilde{B}}(\textbf{x}):=\psi(\tilde{B}\textbf{x})$. Now the formulae become clear when we explicitly write out the Fourier transform,
$$
\psi(\tilde{B}\mathbf{x})=\int_{\R^3}\tilde{\psi}(\mathbf{y})e^{i\mathbf{y}\cdot\tilde{B}\mathbf{x}}\, d\mathbf{y}
=\int_{\R^3}\tilde{\psi}(\tilde{B}^{-1}\mathbf{p})e^{i\mathbf{p}\cdot\mathbf{x}}\, d\mathbf{p}
$$
since, using Lemma \ref{Finishlemma}, we then see that the momentum observables $\tilde{\mathbf{P}}$ on $\Hi$ are simultaneously measurable if 
\begin{align*}
\tilde{\mathbf{P}}(\Delta)&=\int_{\R^3}\bra\psi|dE^{\bone\otimes\mathbf{X}}(\tilde{B}^{-1}\mathbf{p})|\psi\ket\, dE^{\mathbf{P}\otimes\bone}(\Delta-\mathbf{p})
\\&=\int_{\R^3} |\tilde{\psi}(\tilde{B}^{-1}\mathbf{p})|^2\, dE^{\mathbf{P}\otimes\bone}(\Delta-\mathbf{p})
\end{align*}
for each Borel set $\Delta\subset\R^3$. 
\end{proof}
The mapping
$$
\mu^B:\R^3\times\Fi(\R^3)\to\Bi(\Hi),\qquad (\mathbf{x},\Delta)\to d\mu^B(\Delta-\mathbf{x})
$$
is continuous and each $\Delta\to d\mu^B(\Delta-\mathbf{x})$ is a probability measure (so $\mu^B$ is a ``confidence measure"). 
Viewed in another way, for each Borel set $\Delta$, the map $\mathbf{x}\to d\mu^B(\Delta-\mathbf{x})$ is a ``fuzzy event". Following \cite{HLY} we denote this map by $\tilde{\Delta}$ and similarly in momentum space. Then we can write the relation between $\tilde{\mathbf{P}}$ and $E^{\mathbf{P}\otimes\bone}$ as
$$
\tilde{\mathbf{P}}(\Delta)=E^{\mathbf{P}\otimes\bone}(\tilde{\Delta}).
$$
Now comes the problem of recording the energy measurement. That is, we would like to see what outcomes we could have, and so we should try to replace the fuzzy set $\tilde{\Delta}$ by some ordinary set such that we still get the probabilities described by $E^{\mathbf{P}\otimes\bone}$. This is the problem of ``reading the scale" \cite[III.2.4]{BLM}. We cover momentum space $\R^3$ by disjoint cubes $\Delta_n$ labeled by $n\in\N$ and we pick one point $\mathbf{p}^{(n)}$ in the center of each $\Delta_n$. The size of $\Delta_n$ is chosen such that the support of the shifted function $\textbf{x}\to\tilde{\psi}(\textbf{x}-\tilde{B}^{-1}\textbf{p}^{(n)})$ lies in $\Delta_n$. Thus the side length of $\Delta_n$ can be taken to be the diameter of the support $\Delta(\tilde{\psi})$ of $\tilde{\psi}$ multiplied by the field strength $B$, independent of $n$. Define the discrete observable
$$
\mathbf{P}^{\{\Delta_n\}}(n):=E^{\mathbf{P}\otimes\bone}(\Delta_n),\qquad \forall n\in\N.
$$
Compose the pointer function $\tilde{B}^{-1}$ with a map $g_B:\R^3\to\R^3$ such that
$$
g_B(\Delta_n)=\mathbf{p}^{(n)},\qquad g(\R^3\setminus \Delta_n)\cap \{\mathbf{p}^{(1)},\mathbf{p}^{(2)},\dots\}=\emptyset.
$$
Going over to the energy operator $H_0$, the cubes are replaced by disjoint intervals $I_n\subset\R$ with length proportional to the field strength $B$. Write $H^{\{I_n\}}(n):=E^{H_0}(I_n)$ for this rescaled Hamiltonian. Then one shows the following (c.f. \cite[III.2.6]{BLM}).

\begin{prop}\label{rescaling}
$H^{\{I_n\}}$ is the measured observable in the energy measurement from Theorem \ref{potthm2} with the modified pointer function $g_B\circ\tilde{B}$. 
\end{prop}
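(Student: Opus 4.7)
The plan is to apply the general formula for the measured observable (equation \eqref{measobs}) to the modified measurement scheme $(\Ki,\bone\otimes\mathbf{P},\omega_\psi,W_q,g_B\circ\tilde{B})$, and then exploit the deliberate size choice of the cubes $\Delta_n$ (matched to the support of $\tilde{\psi}$) to show that the convolution smearing collapses to the sharp PVM value on each $\Delta_n$, which via $\mathbf{p}\mapsto\mathbf{p}^2/2m$ translates into $E^{H_0}(I_n)$.

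First, I would write down $\Ei^*_{\{n\}}(\bone)$ for this measurement using \eqref{measobs}, keeping the structure from the calculation that established Theorem \ref{potthm2}. The effect is a convolution-type expression
\[
\Ei^*_{\{n\}}(\bone) \;=\; \int_{\R^3}\bra\psi|dE^{\bone\otimes\mathbf{P}}\!\bigl((g_B\circ\tilde{B})^{-1}(\{n\}-\mathbf{x})\bigr)|\psi\ket\, dE^{\mathbf{X}\otimes\bone}(\mathbf{x}),
\]
where by construction of $g_B$ the set $g_B^{-1}(\{n\})$ is (in the momentum-labeling) the cube $\Delta_n$, and the restriction imposed on $g_B$ that its image on the complement of $\Delta_n$ avoids the marked points $\mathbf{p}^{(k)}$ prevents double-counting across different $n$.

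Second, I would invoke the support condition. The side length of $\Delta_n$ was selected exactly so that $\operatorname{supp}\bigl(\tilde{\psi}(\,\cdot\,-\tilde{B}^{-1}\mathbf{p}^{(n)})\bigr)\subseteq\Delta_n$. Combined with $d\mu^B(\mathbf{p})=|\tilde{\psi}(\tilde{B}^{-1}\mathbf{p})|^2\,d\mathbf{p}$ appearing in Theorem \ref{potthm2}, this forces the smearing kernel to integrate to $1$ over $\Delta_n$ and to vanish elsewhere. Consequently the smeared momentum observable $\tilde{\mathbf{P}}$ of Theorem \ref{potthm2} evaluated on $\Delta_n$ reduces to the sharp $E^{\mathbf{P}\otimes\bone}(\Delta_n)=\mathbf{P}^{\{\Delta_n\}}(n)$.

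Third, I would transfer this from the momentum partition $\{\Delta_n\}$ to the energy partition $\{I_n\}$ via the functional calculus for $H_0=\mathbf{P}^2/2m$: the same collapse of the convolution forces the smeared energy observable $\tilde{H}$ to evaluate on $I_n$ as $E^{H_0}(I_n)=H^{\{I_n\}}(n)$. Since $H^{\{I_n\}}$ is the measured observable of the modified instrument, this is precisely the statement of the proposition.

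The main obstacle I anticipate is careful bookkeeping of the compositions $\tilde{B}^{-1}\circ g_B\circ\tilde{B}$ and of the fact that the support-in-a-single-cube property survives the passage from momentum to energy (where $\mathbf{p}\mapsto \mathbf{p}^2/2m$ is not injective in general). One must check that $I_n$ can be taken as the image of $\Delta_n$ under this map — i.e.\ that the chosen cubes and intervals are compatible so that no two cubes fold onto the same energy interval and the support condition transfers to the $I_n$. This is the technical content that turns the smearing into a genuine discretization of $H_0$, just as in the Landau problem.
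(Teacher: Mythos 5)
The paper itself gives no proof of Proposition \ref{rescaling} --- it only points to \cite[III.2.6]{BLM} --- so your argument has to stand on its own, and its decisive step does not hold. You claim that the support condition on the cubes ``forces the smearing kernel to integrate to $1$ over $\Delta_n$ and to vanish elsewhere,'' so that the discretized effect $\int_{\R^3}\mu^B(\Delta_n-\mathbf{y})\,dE^{\mathbf{P}\otimes\bone}(\mathbf{y})$ collapses to the projection $E^{\mathbf{P}\otimes\bone}(\Delta_n)$. For that operator identity one would need $\mathbf{y}\mapsto\mu^B(\Delta_n-\mathbf{y})$ to coincide with $\chi_{\Delta_n}(\mathbf{y})$ almost everywhere with respect to the spectral measure, which here is equivalent to Lebesgue measure. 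But $\mathbf{y}\mapsto\mu^B(\Delta_n-\mathbf{y})$ is continuous and takes values strictly between $0$ and $1$ on a shell of positive Lebesgue measure around $\partial\Delta_n$ (wherever the translated support of the confidence measure straddles the boundary of the cube), so the discretized effect is not a projection and cannot equal $E^{\mathbf{P}\otimes\bone}(\Delta_n)$. What the support condition actually delivers --- and what the cited ``reading the scale'' analysis in \cite{BLM} establishes --- is that the discretized POVM is a coarse-graining of the sharp discrete observable by a confidence (stochastic) matrix concentrated near the diagonal; the identification with $H^{\{I_n\}}$ holds only in that approximate sense or after a further idealization, which your write-up neither proves nor flags.

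The second gap is the one you raise yourself in your final paragraph but do not close: the transfer from the momentum cubes $\Delta_n$ to the energy intervals $I_n$. Since $\mathbf{p}\mapsto\mathbf{p}^2/2m$ is not injective, $E^{H_0}(I_n)$ is the spectral projection onto a spherical shell in momentum space, not onto a cube, so the sharp discrete energy observable is not obtained from $\mathbf{P}^{\{\Delta_n\}}$ by applying the functional calculus bin by bin, and no choice of ``compatible'' cubes and intervals fixes this. The text sidesteps the issue by re-binning directly on the energy axis (``the cubes are replaced by disjoint intervals $I_n\subset\R$''), which means the smearing-and-reading argument has to be rerun for the one-dimensional observable $H_0$ with its own confidence measure (the push-forward of $\mu^B$ under $\mathbf{p}\mapsto\mathbf{p}^2/2m$) rather than transferred from the momentum partition. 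As written, your third step does not produce $E^{H_0}(I_n)$.
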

Thus, as long as we only obtain the energies $n\omega$ for $n\in\N_0$, where $\omega:=B/2m$, the energy can be measured in the presence of the $\mathbf{X}$-measurement that we postulated.

Theorem \ref{potthm2} gives a condition which the momenta (hence energy) of a free electron in the presence of the magnetic field must satisfy. The stronger the coupling to the magnetic field the less is the position operator affected by this coupling; the "position measurement" by the field is accurate. On the other hand, the stronger the $\mathbf{B}$-interaction the more smeared will be the subsequent (or rather joint) energy-measurement. Regarding the measurability we also found above that $(\mathbf{P}-q\mathbf{A})^2$ suffices. We have then realized that the gauge field is equivalent to a smearing leading to the discretization as understood in terms of quantum measurements. The interaction with the magnetic field, \emph{which  is always present in observations on slower time-scales},  leaves the system with an algebra of discrete operators. 

There is an important distinction between this smeared observable and the smearing that one usually has in mind when such a ``fuzzification" of a sharp observable is considered, i.e. the usual picture of experimental error and noise. In the present case the magnetic field interaction is supposed to occur very fast and with extremely many repetitions. Thus it is too regular for being regarded as noise; it will give the same influence on all our measurements.

Finally let us stress that the same argument can be applied whenever there is a comparison of two interactions, accounting in this way for quantum discreteness. 

\section{Potential energy}\label{Potential Energy}
\subsection{Pauli no-go theorem circumvented}\label{Paulisec}
In the usual formulations of quantum and classical physics the Hamiltonian $H$ is bounded from below. Hence the existence of a selfadjoint operator $X_0$ on the same Hilbert space with $[X_0,H]=i\bone$ would be a contradiction \cite{Pau}, \cite{MME},\cite[Sec. 13.2]{Mor}, as seen from the covariance property $e^{i\lambda X_0}E^H(\Delta)e^{-i\lambda X_0}=E^H(\Delta+\lambda)$ of the spectral measure of $H$. This observation is called the \textbf{Pauli theorem}. 

However, if the system is in an equilibrium state $\omega$ then the generator of time translations in the natural choice of Hilbert space (the so-called GNS representation $\Hi_\omega$ of the observable algebra $\Mi$ associated to $\omega$) typically has the whole line $\R$ as spectrum. There are several unitary groups acting on $\Hi_\omega$ implementing the time translations on the image of $\Mi$ as an algebra of operators on $\Hi_\omega$. The naive choice of such a group, for avoiding the Pauli theorem, would be to look for an operator with purely absolutely continuous spectrum $\R$, since this would ensure the existence of a conjugate operator $X_0$. However, the most natural choice is to take the \textbf{Liouvillian} $L$ defined by 
$$
L\Omega=0,
$$
where $\Omega\in\Hi_\omega$ is the vector such that $\omega(A)=\bra\Omega|A\Omega\ket$ for all $A\in\Mi$. If $\Mi=\Bi(\Hi_0)$ for some Hilbert space $\Hi_0$ then 
\begin{equation}\label{Liouvtilde}
L=H\otimes\bone-\bone\otimes H=H-\tilde{H}
\end{equation}
acting on $\Hi_\omega=\Hi\otimes\overline{\Hi}=\Li^2(\Hi_0)$, where $\Li^2(\Hi_0)$ is the algebra of so-called Hilbert-Schmidt operators, $H$ is the Hamiltonian and $\tilde{H}$ is equal to $H$ acting from the right. We shall not need this explicit realization of $\Hi_\omega$ but \eqref{Liouvtilde} shows that the spectrum of $L$ is the set of energy differences (the transition frequencies of the system). The Liouvillian plays an important role in most applications of open quantum systems \cite{BP}, e.g. in nonlinear spectroscopy \cite{Muk}.

In many interesting cases $L$ has absolutely continuous spectrum $\R$ plus the single isolated eigenvalue $0$ corresponding to $\Omega$ embedded in this absolutely continuous spectrum \cite{tBW}. Consider the Hilbert space decomposition
\begin{equation}\label{Paulieq}
\Hi_ \omega=\C\Omega\oplus(\C\Omega)^\perp.
\end{equation}
Restricting $L$ off the one-dimensional subspace $\C\Omega$ spanned by $\Omega$ it is unitarily equivalent to multiplication by $x$ on $L^2(\R,d\nu)$, where $d\nu(x)$ is a measure which is absolutely continuous with respect to the Lebesgue measure. Hence it is possible to find an operator $X_0$ on $(\C\Omega)^\perp$ such that
$$
[L,X_0]=i\bone.
$$ 
This will be the generator of energy translations. With respect to the decomposition \eqref{Paulieq} we can write any operator $A$ on $\Hi_\omega$ as 
$$
A=\begin{pmatrix}A_{11}&A_{12} \\A_{21}&A_{22}\end{pmatrix}.
$$
The deformations of $L$ which we obtain using $X_0$ as generator will thus be of the form $L+V_{22}$ with $V_{22}$ an operator on $(\C\Omega)^\perp$. 

In short, as long as we stay in the equilibrium representation defined by an equilibrium state on the observable algebra, the Pauli theorem is automatically circumvented. This will be used in the next subsection.

\begin{Remark}
If $X_1,X_2,X_3$ and $P_1,P_2,P_3$ are second quantizations acting on Fock space then we have $[X^j,P_k]/i=\delta^j_k N$ where $N$ is the number operator. Hence there is a subspace $\C\Omega$ on which $[X^j,P_k]=0$ and we are in the same situation as above. Both $X_j$ and $P_j$ must annihilate the vacuum $\Omega$.  
\end{Remark}

\subsection{Deforming the Liouvillian}
Let us from now on study interactions with quantum systems which are in an equilibrium state. 

Suppose observables of a given quantum system are represented by operators acting in the equilibrium Hilbert space $\Hi_\omega$. The space-time translations on the system are mathematically described as a unitary group on $\Hi_\omega$ parameterized by $\R^4$. Motivated by the discussion in §\ref{Paulisec}, we assume that the generators $(P_\mu)=(L,P_1,P_2,P_3)$ of the unitary space-time  transformations have purely absolutely continuous spectrum except for an isolated zero corresponding the the vector $\Omega$ implementing the state $\omega$. We then define $(X^\mu)=(X^0,X^1,X^2,X^3)$ to be operators on $(\C\Omega)^\perp$ which satisfy $[P_\mu,X^\nu]/i=\delta_\mu^\nu\bone$ . 

Let $\Ki$ be another Hilbert space and consider the unitary on $\Hi_\omega\otimes\Ki$ given by
\begin{align*}
W_{e}&=e^{-ie\Theta_{\mu\nu}X^\mu\otimes X^\nu},\quad\quad 
\Theta_{\mu\nu}:=\begin{pmatrix}0&-E_1&-E_2&-E_3\\
E_1&0&0&0\\
E_2&0&0&0\\
E_3&0&0&0\\
\end{pmatrix}
\end{align*}
for some real constants $E_1,E_2,E_3$ and $e$, where the $X^\nu$'s acting in $\Ki$ are assumed to have properties analogous to the $X^\mu$'s acting in $\Hi_\omega$ (thus the space $\Ki$ is also describing something like a system in equilibrium).
At this point it is useful to recall how we reasoned before: a measurement of the coordinate operators of a quantum system corresponds to boosting up the system (nonrelativistically). 
\begin{prop}\label{potprop}
For any normal state $\varphi$ on $\Bi(\Ki)$ with $\varphi(X^\mu)<+\infty$ for all $\mu=1,2,3,4$, the measurement $(\Ki,\varphi,W_e)$ of the coordinate operators $X^\mu$ on $\Hi_\omega$ deforms the Liouvillian $L=L_\omega$ into 
\begin{align*}
L_{e\Theta}&:=(\iota\otimes\varphi)[W_{e}^{-1}(L\otimes\bone)W_{e}]=L+e\int_\R \varphi[\textnormal{d}E^{\bone\otimes\mathbf{X}}(\mathbf{x})]\mathbf{E}\cdot\mathbf{x},
\end{align*}
where $\mathbf{E}:=(E_1,E_2,E_3)$ and $\mathbf{E}\cdot\textbf{x}:=E_1x_1+E_2x_2+E_3x_3$. Applying $\Ad(W)$ to the momenta gives
\begin{align*}
e^{i\Theta_{\mu\nu}X^\mu\otimes X^\nu}(P_k\otimes\bone)e^{-i\Theta_{\mu\nu}X^\mu\otimes X^\nu}&=e^{-iE_kX^k\otimes X^0}(P_k\otimes\bone)e^{iE_kX^k\otimes X^0}
\\&=P_k\otimes\bone-E_k\bone\otimes X^0.
\end{align*}
\end{prop}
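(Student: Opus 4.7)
My plan is to exploit the very restrictive form of the coupling matrix $\Theta$. Writing $Z := \Theta_{\mu\nu}X^\mu\otimes X^\nu$, only the $(0,k)$ and $(k,0)$ entries of $\Theta$ are nonzero (for $k=1,2,3$), and the sum splits as $Z = A + B$ with $A := \sum_k E_k X^k\otimes X^0$ and $B := -\sum_k E_k X^0\otimes X^k$. Since all $X^\mu$ mutually commute, $A$ and $B$ commute, so $e^{ieZ} = e^{ieA}\,e^{ieB}$. This splitting is the key that reduces both identities in the proposition to elementary Baker--Campbell--Hausdorff computations with central commutators, a situation in which Theorem \ref{maintheorem} applies directly to $T = P_k$ and to $T = L$ (both being polynomials in the momenta and hence in the admissible class of the theorem).

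For the momentum identity, I would observe that $P_k$ commutes with every $X^\mu$ for $\mu \neq k$; in particular $[P_k, X^0] = 0$, so $P_k\otimes\bone$ commutes with every summand of $B$ and with every summand of $A$ except the single term $E_k X^k\otimes X^0$ (no sum on $k$). The adjoint action of $e^{ieZ}$ on $P_k\otimes\bone$ therefore collapses to that of $e^{ieE_k X^k\otimes X^0}$. Because $[X^k, P_k]$ is a scalar multiple of $\bone$ it is central, so the BCH series truncates after the first commutator, and one obtains the displayed expression $P_k\otimes\bone - E_k\bone\otimes X^0$ (with $e=1$, as written in the proposition). This is exactly the specialization of Theorem \ref{maintheorem} to the single commuting pair $(X^k, X^0)$ of generators.

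For the Liouvillian identity the roles of $A$ and $B$ are interchanged. Since $L = P_0$ satisfies $[L, X^k] = 0$ for each spatial index $k$, the piece $A$ of $Z$ acts trivially by conjugation on $L\otimes\bone$, and the computation reduces to conjugation by $e^{ieB}$. The only nontrivial commutator is $[X^0, L]$, again a scalar multiple of $\bone$; one BCH step then yields $W_e^{-1}(L\otimes\bone)W_e = L\otimes\bone + e\sum_k E_k\,\bone\otimes X^k$. Applying the partial state $\iota\otimes\varphi$ is then pure linearity plus the spectral theorem: the hypothesis $\varphi(X^k)<+\infty$ lets one write $\varphi(X^k) = \int x_k\,\varphi[dE^{\bone\otimes\mathbf{X}}(\mathbf{x})]$, and reassembling the three components gives the displayed integral $L + e\int_{\R^3}\mathbf{E}\cdot\mathbf{x}\,\varphi[dE^{\bone\otimes\mathbf{X}}(\mathbf{x})]$.

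The main obstacle is not algebraic but analytic: all operators here are unbounded, so the splitting of the exponential, the BCH truncation, and the subsequent partial state must be justified on an appropriate dense invariant domain. This is precisely the content of Theorem \ref{maintheorem} (originating in \cite{An1}), which upgrades the formal BCH identity to a weak equality of operators on a suitable dense subspace, valid for the polynomial class to which $P_k$ and $L$ belong. No further analytic obstruction arises, because the commutators encountered are scalars and so propagate no domain issues beyond those already handled by the theorem.
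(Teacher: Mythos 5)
Your approach is the right one and is in substance the paper's own: the paper offers no explicit proof of Proposition \ref{potprop}, treating it (like Lemma \ref{potthm}) as an immediate consequence of Theorem \ref{maintheorem}, and your reduction to one-step Baker--Campbell--Hausdorff identities with central commutators, followed by $\iota\otimes\varphi$ and the spectral theorem, is exactly that consequence. One piece of bookkeeping does not survive checking, however. With your decomposition $A=\sum_k E_k X^k\otimes X^0$, $B=-\sum_k E_k X^0\otimes X^k$ (i.e.\ reading the displayed matrix with $\mu$ as the row index), the commutation relation $[P_\mu,X^\nu]=i\delta_\mu^\nu\bone$ gives $[\,iE_kX^k\otimes X^0,\;P_k\otimes\bone\,]=iE_k[X^k,P_k]\otimes X^0=+E_k\,\bone\otimes X^0$, so your $A$ yields $P_k\otimes\bone+E_k\,\bone\otimes X^0$, and likewise your $B$ yields $L\otimes\bone-e\sum_kE_k\,\bone\otimes X^k$ --- the \emph{opposite} signs to the ones you assert (and to those in the proposition). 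The stated signs correspond to the transposed reading of $\Theta_{\mu\nu}$; since $\Theta$ is skew-symmetric this is purely a question of which index labels the rows, and the paper never fixes that convention, but as written your final formulas do not follow from your own definitions of $A$ and $B$. You should either adopt the transposed convention explicitly or carry the minus signs through. Everything else --- the factorization $e^{ieZ}=e^{ieA}e^{ieB}$ from commutativity of the $X^\mu$, the truncation of the BCH series after the first (central) commutator, the use of $\varphi(X^\mu)<+\infty$ to evaluate the partial state as an integral against $\varphi[\mathrm{d}E^{\bone\otimes\mathbf{X}}(\mathbf{x})]$, and the appeal to Theorem \ref{maintheorem} to dispose of the domain questions for the unbounded operators --- is correct and is what the paper intends.
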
 
Defining the potential energy as $V=e\Phi:=\textbf{E}\cdot\varphi(\textbf{X})$, the energy operator in the presence of a constant electric field would for subsequent measurement on $\Hi_\omega$ appear as
$$
L_{e\Theta}=L+V,
$$
and $\Hi_\omega$ cannot be separated from $\Hi_\omega\otimes\Ki$ under such conditions (the electric field system $\Ki$ and the system $\Hi_\omega$ appear ``entangled" to an observer interacting via slower energy transfer). 

The classical electromagnetic field (using same notation), 
$$
\textbf{E}=-\boldsymbol\nabla\Phi-\frac{\partial \textbf{A}}{\partial t},
$$
thus manifests itself via the vector potential $\textbf{E}\bone\otimes X^0$ and the scalar potential $\Phi=V/e$. Still, the operator $X^0$ is probably best regarded just as a generator of energy transfer. 

Adding Proposition \ref{potprop} to the discussion in §\ref{Minimal Coupling to Gauge Potential}, we conclude that the interaction of a quantum system with an electromagnetic field is recovered using an evolution operator $e^{-iF_{\mu\nu}X^\mu\otimes X^\nu}$ where the skew-symmetric matrix $F_{\mu\nu}$ is none other than the electromagnetic force:
\begin{equation}\label{force}
F_{\mu\nu}:=\begin{pmatrix}
0&-E_1&-E_2&-E_3\\
E_1&0&-B_3&B_2\\
E_2&B_3&0&-B_1\\
E_3&-B_2&B_1&0\\
\end{pmatrix}.
\end{equation}

\subsection{Gauge structure}\label{gaugesec}
If $\theta$ is a differentiable real-valued function on $\R^4$ then the unitary $U:=e^{i\theta(X)}$ satisfies
\begin{equation}\label{gaugetransfo}
U\mathbf{P}U^{-1}=\mathbf{P}-\boldsymbol\nabla \theta,\qquad ULU^{-1}=L-\frac{\pd \theta}{\pd t},
\end{equation}
and the system is said to be \textbf{gauge invariant} under the transformations \eqref{gaugetransfo} precisely because they are implemented by a unitary in the system Hilbert space $\Hi_\omega$. The motivation for this is if we apply $U$ to vectors $\psi\in\Hi_\omega$ at the same time as \eqref{gaugetransfo}, everything remains unchanged \cite[§5]{Ja}.

The same transformations \eqref{gaugetransfo} can be achived by the replacement
\begin{equation}\label{gaugedgener}
A_\mu\to A_\mu+\pd_\mu \theta(X),\qquad \mu=0,1,2,3
\end{equation}
in the interaction $W=e^{-iX^\mu\otimes A_\mu}\in\Bi(\Hi_\omega\otimes\Ki)$. We see that replacements such as \eqref{gaugedgener} are precisely those which can be implemented by ``local unitary transformations", i.e. by unitaries acting on $\Hi_\omega$ alone. They can be achived without an external interaction and should be regarded as a gauge degree of freedom in the system.

\section{Classical limit}\label{quantization}

\subsection{Phase-space quantum mechanics and Rieffel deformations}
Quantization requires a choice of ordering of operators. Let us recall some aspects of ``Weyl quantization", which corresponds to the symmetric ordering. For a nice phase-space function $f:\R^n\times\R^n\to\C$, the Weyl quantization $\Op(f)$ is an operator on $L^2(\R^n)$. For $f,g\in\Si(\R^n\times\R^n)$ (Schwartz space), the product $\Op(f)\Op(g)$ is again a Weyl operator. Therefore, composition of Weyl operators defines implicitly a noncommutative product $\times_\Theta$ on an algebra of functions on $\R^n\times\R^n$:
$$
\Op(f)\Op(g)=\Op(f\star g).
$$
This product (called the ``Moyal product") has an explicit integral formula
$$
(f\star g)(x)=\iint_{\R^{2n}\times \R^{2n} }f(x+\Theta z)g(x+y)e^{2\pi i z\cdot y}\, dz\, dy
$$
where $\Theta=\bigl(\begin{smallmatrix}0&\bone\\ -\bone&0\end{smallmatrix} \bigr)$ is the standard symplectic structure on $\R^n$. If one defines an action $\alpha$ of $\R^n\times\R^n$ on $\Si(\R^n\times\R^n)$ by translations,
$$
(\alpha_z(f))(x):=f(x+z),
$$
then $f\star g$ can be expressed as
$$
f\star g=\iint{\R^{2n}\times \R^{2n}}\alpha_{\Theta z}(f)\alpha_y(g)e^{2\pi i z\cdot y}\, dz\, dy.
$$
Rieffel deformation \cite{Rie} amounts to defining such a deformed product on a general $C^*$-algebra $\Ai$. He shows that it has nice properties and makes sense also if $\Ai$ is not commutative. 

In fact, if $\alpha$ is an action of $\R^n$ on an algebra of observables $\Ai$ (a $C^*$-algebra would suffice), the deformation of operators that we obtained from measurements (``warped convolution") can be effected by keeping the \emph{undeformed} operators and instead introducing a \emph{deformed product} $\times_\Theta$, where $\Theta$ is the same deformation matrix as the one used for warped convolution. The elements $A,B$ smooth under the action $\alpha$ on $\Ai$ form an algebra under $\times_\Theta$, which is explicitly given by
\begin{equation}\label{defproduct}
A\times_{\Theta}B=\int_{\R^n \times\R^n}\alpha_{\Theta x}(A)\alpha_y(B)e^{2\pi i x\cdot y}\, dx\, dy.
\end{equation}
If $A_\Theta$ denotes the warped convolution of an operator $A$ with respect to the same action $\alpha$ and matrix $\Theta$, then the important relation is (see \cite{BLS})
\begin{equation}\label{prodrelation}
A_\Theta B_\Theta=(A\times_{\Theta}B)_\Theta.
\end{equation}
For more information about the relation between Rieffel deformation and warped convolution, see \cite{An2}. 

To each such product $\times_\Theta$ there is an associated Poisson bracket on the subalgebra of $\alpha$-smooth elements.  
\begin{Definition}\label{Poisson bracket}
Let $\alpha$ be an action of $\R^n$ on a $C^*$-algebra $\Ai$ and let $\Theta$ be a skew-symmetric $n\times n$ matrix. We say that a Poisson bracket $\{,\cdot,\}_\Theta$ has the same direction as the deformation \eqref{defproduct} if it corresponds to the first-order bidifferential operator in the asymptotic expansion of the product \eqref{defproduct}.
\end{Definition}

Explicitly, for elements $A,B\in\Ai$ smooth under $\alpha$, such a bracket is given by \cite[\S6]{Rie}
\begin{equation}\label{assbracket}
\{A,B\}_\Theta=\sum_{j,k=1}^n\Theta_{j,k}(X_jA)(X_kA),
\end{equation}
where $X_j$ denotes differentiation in the $j$'th direction of $\R^n$ via $\alpha$. 

When $\Theta$ is the standard symplectic structure on $\R^n$ the Poisson bracket is just the canonical one. As we discuss in §\ref{Feynman Bracket}, the case when $\Theta$ is the force matrix \eqref{force} gives a Poisson bracket which resembles the one in Feynman's approach to electromagnetism.

\subsection{Feynman brackets meet Rieffel deformations}\label{Feynman Bracket}
Let us now show that Maxwell's theory of electromagnetism arises as the classical limit of the interactions between two systems we have modeled according to quantum measurement theory. 

First we recapitulate why a deformation of the operator multiplication could be a good approximation for the effect of a measurement interaction. We always discuss the interaction 
$$
F_{\mu\nu}X^\mu\otimes (X^\nu+\pd_\mu\theta(X))=X^\mu\otimes A_\mu
$$
where $A=(A_0,A_1,A_2,A_3)$ is the gauge potential and $\theta$ is as in §\ref{gaugesec}. 
\begin{enumerate}[(i)]
\item{Measurement using $e^{-iX^\mu\otimes A_\mu}$ deforms the operators on $\Hi$ via a very peculiar warped convolution, namely one which is unitary implemented. }
\item{There is always a version, corresponding to ignoring the second system $\Ki$, which is obtained from another warped convolution. Namely, it comes from the action $\alpha_p(T):=e^{ip_\mu X^\mu}Te^{-ip_\mu X^\mu}$ on $T\in\Bi(\Hi)$ and defines $T_\Theta$ by \eqref{warped} where $\Theta=F$ refers to the constant force matrix \eqref{force}}.
\item{The relation $S_\Theta T_\Theta=(S\times_\Theta T)_\Theta$ defines a deformed product $\times_\Theta$ on the algebra of smooth elements in $\Bi(\Hi)$.}
\end{enumerate}
We can describe a spacetime variation in $A_\mu$ if we smear the field with some smooth function $g\in\Si(\R^4)$ (``form factor" in $x$-space), defining
$$
A_\mu(g):=\int_{\R^4}g(x)_\mu F_{\mu\nu}(x^\nu+\pd_\mu\theta(x))\, dx,
$$
where summation is understood for vectors $(x_\mu)$ in expressions like $x_\mu x^\mu $ but \emph{not} in $x_\mu x_\mu$. 

For the purpose of discussing the classical limit, we now consider observables parameterized by space-time, in the spirit of quantum field theory. Let $\Lambda$ be a region in $\R^4$ containing the system of interest. For each subregion $\Oi\subseteq\Lambda$ we associate a $C^*$-algebra $\Ai(\Oi)$ containing the observables whose expectation values are, in the absence of interactions, functions of points in $\Oi$ only (i.e. they do not depend on the points outside $\Oi$). We thus have $\Ai(\Oi_1)\subseteq\Ai(\Oi_2)$ whenever $\Oi_1\subseteq\Oi_2$. Each $\Ai(\Oi)$ is supposed to act on the same Hilbert space $\Hi$. Now we want to describe an interaction with the system $\Oi\subset\Lambda$ of the form $W(\Oi)=e^{-iX^\mu\otimes A_\mu(g)}$, where the tensor product $\Hi\otimes\Ki$ contains the state vectors of the total system and the form factor $g$ is supported in $\Oi$.  Let
$$
\alpha_p(X):=e^{ip_\mu X^\mu}Xe^{-ip_\mu X^\mu},\qquad X\in\Bi(\Hi),\,p\in\R^4
$$
be the associated action on $\Bi(\Hi)$ that we get by ignoring the second system $\Ki$. 

We may let $\Oi$ be sufficiently small for the interaction unitary $W$ to not vary over the size of $\Oi$ (the ``dipole approximation" with respect to the size of $\Oi$). Then for any point $x\in\Oi$, the interaction with $\Ai(\Oi)$ is given by $W(x)=e^{-iX^\mu\otimes A_\mu(x)}=e^{-iF_{\mu\nu}(x)X^\mu\otimes X^\nu}$, where $F(x)$ is the force matrix \eqref{force} with entries $F_{\mu\nu}(x)$ determined by this fixed $x$. For elements $S$, $T$ of $\Ai(\Oi)$ which are smooth for the action, the deformed product is
$$
S\times_{F(x)}T=\iint_{\R^4 \times\R^4}\alpha_{F(x)y}(S)\alpha_{q}(T)e^{2\pi iq\cdot y}\,dy\, dq.
$$
We require that $\Ai(\Oi)$ contains all operators $f(P)$ for $f\in C_0(\R^4)$ which are functions of the momenta $P=(P_\mu)$ and whose Fourier transform is supported in $\Oi$. For two such operators $f_1(P)$ and $f_2(P)$ we have
 $$
f_1(P)\times_{F(x)}f_2(P)=\iint_{\R^4 \times\R^4}f_1(P+F(x)y)f_2(P+q\bone)e^{2\pi iq\cdot y}\,dy\, dq.
$$
To obtain a Poisson bracket on $\Ai(\Oi)$ we first consider the commutative subalgebra generated by the $f(P)$'s. The momenta $P_0,P_1,P_2,P_3$ correspond to the coordinate functions $p_0,p_1,p_2,p_3$ on momentum space $\R^4$. Recall the standard definition of a Poisson bracket on an algebra of smooth functions on a Lie group \cite[\S3.1]{dSW},
$$
\{f,g\}(x):=\sum_{\mu,\nu=1}^4\{p_\mu,p_\nu\}(x)\frac{\partial f}{\partial p_\mu}\frac{\partial g}{\partial p_\nu},\quad\quad f,g\in C^\infty(\R^4).
$$
If this is supposed to be  in the same direction as the deformation (see Definition \ref{assbracket}) then comparing with the right-hand side of Equation \eqref{assbracket} we see that
\begin{equation}\label{Fbracket}
\{p_\mu,p_\nu\}(x)=F_{\mu\nu}(x).
\end{equation} 
The coordinate operators $X^0,X^1,X^2,X^3$ are defined to be operators on $\Hi$ satisfying $[X^\mu,P_\nu]=i\delta^\mu_\nu\bone$. If the Fourier transform of $f\in C_0(\R^4)$ is supported in $\Oi$ then $[h(X),f(P)]$ is supported on $\operatorname{supp}(h)\cap\Oi$. In that way, the parameterization $x=(x^\mu)$ of space-time $\R^4$ is given by the spectral values of $X=(X^\mu)$. 

For operators on $\Bi(\Hi)$ which are sufficiently smooth under the actions of both $e^{-ix^\mu P_\mu}$ and $e^{-p_\mu X^\mu}$ we can replace the operator multiplication by the Moyal product. These operators are thus identified with functions on $\R^4\times\R^4$. Taking the limit $\hbar\to 0$ we get a bracket including the term \eqref{Fbracket} as well as the standard term $\{x^\mu,p_\nu\}=\delta^\mu_\nu 1$. This is the classical limit of $\Ai$ obtained when both $F$ and $\hbar$ go to zero. On the classical level, the brackets \eqref{Fbracket} define a function of $x\in\R^4$. 

There has been many papers discussing the derivation of Maxwell's equations using nonstandard Poisson brackets (the most inspiring for this work being \cite{MP},\cite{Bra}). According to Dyson \cite{Dys}, the first ideas of this sort are Feynman's, hence the name ``Feynman brackets", even though Feynman assumed a commutator instead of a Poisson brackets, and his nonrelativistic setting made it hard to identify the force components $F_{\mu\nu}$ as the structure functions of such a bracket. We could use their arguments to take the final steps to the Maxwell equations. However, the way we found $F$ makes this unnecessary.

For, we started from the unitary $e^{-iF_{\mu\nu}X^\mu\otimes X^\nu}$ and defined the magnetic and electric fields $\textbf{B},\textbf{E}$ as the components of $F$, identifying at the same time the relations $\textbf{B}=\nabla\times\textbf{A}$ and $\textbf{E}=-\nabla\Phi-\partial_t\textbf{A}$. As is well known, this is enough to obtain the field equations. Explicitly, with $x^0,x^1,x^2,x^3$ the dual basis of $\R^4$ for which the coordinate operators (and hence $\nabla$) corresponds, the force satisfies
$$
\frac{\partial F_{\nu\rho}}{\partial x^\mu}+\frac{\partial F_{\rho\mu}}{\partial x^\nu}+\frac{\partial F_{\mu\nu}}{\partial x^\rho}=0,
$$
which gives the two homogeneous field equations. Next, $\partial^\mu \partial^\nu F_{\mu\nu}=0$ implies
$$
\partial^\nu F_{\mu\nu}=j_{\mu}
$$
for some source field $j_{\mu}$. This gives the remaining two equations.

The aim of this section was not to derive classical electromagnetism using a minimal set of mathematical assumptions. It is hard to beat Feynman and followers in this aspect. Rather, the point is that the interaction $e^{-iF_{\mu\nu}X^\mu\otimes X^\nu}$ represents quantum disturbances which accumulate to manifest themselves as classical forces. There was a \emph{physical} motivation for introducing the force based on the most basic model for interactions (viz. the von Neumann model). 

\section{Concluding remarks}
We have seen the relevance of the notion of sequential measurement in a simple example where it accounts for the effect of a magnetic field on the quantum system of a charged particle moving in space. There the interaction with the magnetic field constitutes a first measurement and subsequent measurements done by some experimentalist will be affected by that interaction. It is the much faster timescale of the magnetic-field interaction with the particle that makes us view it as the interaction ``prior" to the subsequent one performed by some experimentalist. The result is that the experimentalist can only get a correct description if he includes a minimal coupling term in the Hamiltonian. Such a simultaneous measurement also requires that the measured energy observable is a coarse-grained version of the Hamiltonian. Our main insight is thus that such a discretization appears from the competition of the interactions described by the same theory. 

We have assumed that the energy-momentum $P=(L,P_1,P_2,P_3)$ has purely absolutely continuous spectrum except for an isolated zero eigenvalue for a common eigenvector. Such a setting appears for free particles or in representations of equilibrium states on local observable algebras in algebraic quantum field theory. In particular we used an infinite-dimensional Hilbert space $\Hi$. Although usually not needed in practice, we have thus added infinitely many degrees of freedom to make it possible for electromagnetism to appear using only Heisenberg commutation relations.

To make contact with ordinary quantum mechanics where the system is usually taken to be a finite-dimensional Hilbert space $\GH\cong\C^n$, we mention that such a setting amounts to considering $\Hi\otimes\GH$, which can also be identified with $\Hi^{\oplus n}$, the direct sum of $n$ copies of $\Hi$. 

The explanation of quantum discreteness that we gave in §\ref{discsec} is just the quantum Zeno effect \cite{FP1}, \cite{FP2}. Namely, the latter can also be understood as resulting from two competing interactions of a system interacting with two different quantum systems. What we have done is simply to isolate what such competing interaction is needed in order to get electromagnetism as classical limit of a one-level system. 

\section{References}
\bibitem[ALV]{ALV} Accardi L, Lu YG, Volovich I. Quantum theory and its stochastic limit.  Springer-Verlag (2002).

\bibitem[An1]{An1} Andersson A. Operator deformations in quantum measurement theory. Lett. Math. Phys. Vol 104, Issue 4, pp. 415-430 (2014). 

\bibitem[An2]{An2} Andersson A. Index pairings for $\R^n$-actions and Rieffel deformations. arXiv:1406.4078 (2014). 

\bibitem[ADK]{ADK} Aschieri P, Dimitrijevi\'c M, Kulish P, Lizzi F, Wess J. Noncommutative spacetimes: symmetries in noncommutative
geometry and field theory. Lect. Notes Phys. Vol 774 (2009).

\bibitem[BaGh]{BaGh} Basu B, Ghosh S. Quantum Hall effect in bilayer systems and the noncommutative plane: A toy model approach. Physics Letters A. Vol 346, Issue 1, pp. 133-140 (2005).

\bibitem[Bra]{Bra} Bracken P. Relativistic equations of motion from Poisson brackets. International Journal of Theoretical Physics. Vol 37, Issue 5, (1998). 

\bibitem[BP]{BP} Breuer HP, Petruccione F. The theory of open quantum systems. Oxford University Press (2003).

\bibitem[BLS]{BLS} Buchholz D, Lechner G, Summers S. Warped convolutions, Rieffel deformations and the construction of quantum field theories. Commun. Math. Phys. Vol 304, pp. 95-123 (2011). 

\bibitem[BS]{BS} Buchholz D, Summers S. Warped convolutions: a novel tool in the construction of quantum field theories. In: Quantum field theory and beyond. World Scientific, pp. 107-121 (2008). 

\bibitem[BL]{BL} Busch P, Lahti PJ. The standard model of quantum measurement theory: history and applications. Found. Phys. Vol 26, Issue 7, pp. 875-893 (1996).

\bibitem[BLM]{BLM} Busch P, Lahti PJ,  Mittelstaedt P. The quantum theory of measurement. Second revised edition. Springer-Verlag (1996).

\bibitem[CHT]{CHT} Carmeli C, Heinonen T, Toigo A. On the coexistence of position and momentum observables. J. Phys. A. Vol 38, pp. 5253-5266 (2005).

\bibitem[CHT2]{CHT2} Carmeli C, Heinonen T, Toigo A. Sequential measurements of conjugate observables. J. Phys. A: Math. Theor. Vol 44, p. 285304 (2011).

\bibitem[dSW]{dSW} Da Silva AC. Weinstein A. Geometric models for noncommutative algebras. Berkeley Mathematics Lecture Notes Vol 10 (1999).

\bibitem[DaJe]{DaJe} Dayi ÖF, Jellal A. Hall effect in noncommutative coordinates. J. Math. Phys. Vol 43, Issue 10, pp. 4592-4601 (2002).

\bibitem[DL]{DL} Davies EB, Lewis JT. An operational approach to quantum probability. Comm. Math. Phys. Vol 18, p. 239 (1970).

\bibitem[Dys]{Dys} Dyson FJ. Feynman's proof of the Maxwell equations. Amer. J. Phys. Vol 58, p. 209 (1990).

\bibitem[FP1]{FP1} Facchi P, Pascazio S. Quantum Zeno dynamics: mathematical and physical aspects. J. Phys. A: Math. Theor. Vol 41, Issue 49, p. 493001. (2008).

\bibitem[FP2]{FP2} Facchi P, Pascazio S. Three different manifestations of the quantum Zeno effect. In: Irreversible Quantum Dynamics. Lect. Notes Phys. Vol 622, p. 141. Springer-Verlag (2003).

\bibitem[Haa]{Haa} Haag R. Local quantum physics. Springer-Verlag (1992).

\bibitem[HHW]{HHW} Haag R, Hugenholtz NM, Winnink M. On the equilibrium states in quantum statistical mechanics. Commun. Math. Phys. Vol 5, pp. 215-236 (1967).

\bibitem[HLY]{HLY} Heinonen T, Lahti P, Ylinen K. Covariant fuzzy observables and coarse-graining. Rep. Math. Phys. Vol 53, Issue 3, pp. 425-441 (2004).

\bibitem[HRS]{HRS} Heinosaari T, Reitzner D, Stano P. Notes on joint measurability of joint observables. Found. Phys. Vol 38, Issue 12, pp. 1133-1147 (2008).

\bibitem[HW]{HW} Heinosaari T, Wolf M. Nondisturbing quantum measurements. J. Math. Phys. Vol 51, Issue 9, p. 092201 (2010).

\bibitem[Hol]{Hol} Holevo AS. Radon-Nikodym derivatives of quantum instruments. J. Math. Phys. Vol 39, pp. 1373-1387 (1998).

\bibitem[Ja]{Ja} Jauch JM. Gauge invariance as a consequence of Galilei-invariance for elementary particles. Hel. Phys. Acta. Vol 37, pp. 284-292 (1964).

\bibitem[Lan]{Lan} Landsman NP. Mathematical topics between classical and quantum mechanics. Springer-Verlag (1998). 

\bibitem[LW]{LW} Lechner G, Waldmann S. Strict deformation quantization of locally convex algebras and modules. arXiv:1109.5950 (2011).

\bibitem[MP]{MP} Montesinos M, P\'{e}rez-Lorenzana A. Minimal coupling and Feynman's proof. Int. J. Theor. Phys. Vol 38, Issue 3, pp. 901-910 (1999).

\bibitem[Mor]{Mor} Moretti V. Spectral theory and quantum mechanics - with an introduction to the algebraic formulation. Unitext, La Matematica per il 3+2. Springer (2013). 

\bibitem[Mu1]{Mu1} Much A. Quantum mechanical effects from deformation theory. Quantum mechanical effects from deformation theory. J. Math. Phys. Vol 55, Issue 2, p. 022302 (2014).

\bibitem[Mu2]{Mu2} Much A. Wedge-local quantum fields on a nonconstant noncommutative spacetime. J. Math. Phys. Vol 53, Issue 8, p. 082303 (2012).

\bibitem[Mu3]{Mu3} Much A. Self-adjointness of deformed unbounded operators. J. Math. Phys. Vol 56, p. 093501 (2015).

\bibitem[MME]{MME} Muga J, Sala Mayato R, Egusquiza I (eds.). Time in quantum mechanics. Lect. Notes Phys. Vol 72. Springer (2002).

\bibitem[Muk]{Muk} Mukamel S. Principles of nonlinear optical spectroscopy. Oxford University Press (1995).

\bibitem[Oza]{Oza} Ozawa M.  Quantum measuring processes of continuous observables. J. Math. Phys. Vol 25, Issue 1, pp. 79-87 (1984).

\bibitem[Pau]{Pau}  Pauli W. Handbuch der Physik, Vol 24, part 1, p. 140, Springer-Verlag (1958).

\bibitem[Rie]{Rie} Rieffel MA. Deformation quantization for actions of $\R^d$. Mem. Amer. Math. Soc. Vol 106, Issue 506 (1993).

\bibitem[tBW]{tBW} Ten Brinke G, Winnink M. Spectra of Liouville operators. Commun. Math. Phys. Vol 51, pp. 135-150 (1976).

\bibitem[vN]{vN} Von Neumann J. Mathematical foundations of quantum mechanics. Princeton University Press (1955).

\end{document}